\newtheorem{theorem}{Theorem}
\newtheorem{lemma}{Lemma}
\newtheorem{definition}{Definition}
\newtheorem{proof}{Proof}
\newtheorem{conjecture}{Conjecture}
\begin{document}

\title{An Algebraic-Combinatorial Proof Technique for the GM-MDS Conjecture}

\author{Anoosheh Heidarzadeh and Alex Sprintson\\ Texas A\&M University, College Station, TX 77843 USA}

\maketitle
\thispagestyle{empty}  





\begin{abstract}
This paper considers the problem of designing maximum distance separable (MDS) codes over small fields with constraints on the support of their generator matrices. For any given $m\times n$ binary matrix $M$, the \emph{GM-MDS conjecture}, due to Dau~\emph{et~al.}, states that if $M$ satisfies the so-called MDS condition, then for any field $\mathbb{F}$ of size $q\geq n+m-1$, there exists an $[n,m]_q$ MDS code whose generator matrix $G$, with entries in $\mathbb{F}$, fits $M$ (i.e., $M$ is the support matrix of $G$). Despite all the attempts by the coding theory community, this conjecture remains still open in general. It was shown, independently by Yan \emph{et al.} and Dau \emph{et al.}, that the GM-MDS conjecture holds if the following conjecture, referred to as the \emph{TM-MDS conjecture}, holds: if $M$ satisfies the MDS condition, then the determinant of a transformation matrix $T$, such that $TV$ fits $M$, is not identically zero, where $V$ is a Vandermonde matrix with distinct parameters. In this work, we generalize the TM-MDS conjecture, and present an algebraic-combinatorial approach based on polynomial-degree reduction for proving this conjecture. Our proof technique's strength is based primarily on reducing inherent combinatorics in the proof. We demonstrate the strength of our technique by proving the TM-MDS conjecture for the cases where the number of rows ($m$) of $M$ is upper bounded by $5$. For this class of special cases of $M$ where the only additional constraint is on $m$, only cases with $m\leq 4$ were previously proven theoretically, and the previously used proof techniques are not applicable to cases with $m > 4$. 
\end{abstract}


\section{Introduction}
In recent years, there has been a growing interest in designing maximum distance separable (MDS) codes with constraints on the support of the codes' generator matrix \cite{YS:2013,DSDY:2013,HHD:2014,DSY:2014,YSZ:2014,HHYD:2014,HTH:2015,DSY:2015,HLH2:2016,HLH:2016,Y:2016}. Such constraints arise in wireless network coding and distributed storage scenarios where each user/server has access to only a subset of the information symbols. Two examples of such scenarios are cooperative data exchange in the presence of an eavesdropper \cite{YS:2013,YSZ:2014}, and simple multiple access networks with link/relay errors \cite{HHYD:2014,DSY:2015}.



Given an $m\times n$ binary (support) matrix $M=(M_{i,j})$ and a field $\mathbb{F}$ of size $q$, the problem is to design an $[n,m]_q$ MDS code with a generator matrix $G=(G_{i,j})$, $G_{i,j}\in\mathbb{F}$ (i.e., all $m\times m$ sub-matrices of $G$ are full-rank) fitting $M$ (i.e., if $M_{i,j}=0$, then $G_{i,j}=0$). Note that for some $M$, there exists no MDS code whose generator matrix fits $M$ (i.e., $M$ is not completable to MDS). Nevertheless, there is a simple condition, known as the MDS condition, which characterizes all matrices that are completable to MDS for sufficiently large fields \cite{DSY:2014,YSZ:2014}. A matrix $M$ satisfies the \emph{MDS condition} if:  
\[|\cup_{i\in I}\hspace{0.25em} \mathrm{supp}(M_i)|\geq n-m+|I|, \forall I\subseteq \{1,\dots,m\}, I\neq \emptyset,\] where $M_i$ is the $i$th row of $M$, and $\mathrm{supp}(M_i)$ is the support of $M_i$. The existence of MDS codes (over sufficiently large fields) whose generator matrix's support satisfies the MDS condition was shown, e.g., in \cite{DSY:2015}, via Edmonds matrix and Hall's marriage theorem. The following conjecture, due to Dau~\emph{et~al.} \cite{DSY:2014}, aims for generalizing this result for small fields.


\begin{conjecture}[GM-MDS Conjecture]
If the matrix $M$ satisfies the MDS condition, then for any field $\mathbb{F}$ of size $q\geq n+m-1$, there exists an $[n,m]_q$ MDS code whose generator matrix $G$ with entries in $\mathbb{F}$ fits the matrix $M$.	
\end{conjecture}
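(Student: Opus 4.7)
The plan is to reduce the GM-MDS conjecture to the TM-MDS conjecture via the equivalence noted in the excerpt, and then attack TM-MDS by an algebraic-combinatorial polynomial-degree reduction. Introduce formal indeterminates $\alpha_1,\ldots,\alpha_n$, let $V$ be the $m\times n$ Vandermonde matrix with $V_{k,j}=\alpha_j^{k-1}$, and construct a transformation matrix $T$ with entries in $\mathbb{F}[\alpha_1,\ldots,\alpha_n]$ as follows. For each row $i$ of $M$, the MDS condition applied to $I=\{i\}$ yields $|Z_i|\le m-1$ where $Z_i=\{j:M_{i,j}=0\}$; extend $Z_i$ to a set $Z_i'$ of size exactly $m-1$ while still satisfying the MDS condition, and take $T_i$ to be the coefficient vector of $t_i(x)=\prod_{j\in Z_i'}(x-\alpha_j)$. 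Then $(TV)_{i,j}=t_i(\alpha_j)$ vanishes for all $j\in Z_i$, so $TV$ symbolically fits $M$, and each $T_{i,k}$ is a signed elementary symmetric polynomial in $\{\alpha_j:j\in Z_i'\}$. Once $\det(T)\not\equiv 0$ is established, a Schwartz--Zippel argument picks distinct specializations $\alpha_j^\star$ over any field of size at least $n+m-1$ at which $\det(T)\neq 0$, and the specialized $TV$ is an MDS generator matrix fitting $M$.

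The heart of the plan is to establish $\det(T)\not\equiv 0$ by polynomial-degree reduction. Expanding via Leibniz, $\det(T)=\sum_\sigma \mathrm{sgn}(\sigma)\prod_i T_{i,\sigma(i)}$, where $\deg T_{i,k}=m-k$ in the $\alpha_j$; this makes the multidegree in the individual variables depend only on which subsets $Z_i'$ contribute, even though the total degree $\sum_i(m-\sigma(i))$ varies with $\sigma$. I would fix a monomial order adapted to the incidence structure of $\{Z_i'\}$, single out an extremal monomial $\mu^\star$, and argue that only a restricted, combinatorially controlled family $\mathcal{F}\subseteq S_m$ of permutations can produce $\mu^\star$. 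The ``polynomial-degree reduction'' step drops the full $m!$-permutation sum to this smaller $\mathcal{F}$, taming the cancellation problem. I would then prove a combinatorial lemma that the signed sum $\sum_{\sigma\in\mathcal{F}}\mathrm{sgn}(\sigma)\prod_i [\mu^\star]T_{i,\sigma(i)}$ is nonzero; a natural route is induction on $m$, peeling off a row of minimum weight and reducing to an $(m-1)\times(n-1)$ instance whose support matrix still satisfies the MDS condition, with $m\le 2$ handled by direct computation.

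The main obstacle is controlling cancellations among the permutations in $\mathcal{F}$ as $m$ grows. Even after the degree-reduction step, the family $\mathcal{F}$ can be large, and its signed contributions are products of elementary symmetric polynomials over overlapping index sets $Z_i'$ — precisely the setting in which symbolic identities among symmetric functions can conspire to produce zero. Previous techniques stall beyond $m=4$ because they do not control this sum; the polynomial-degree reduction alleviates the problem by restricting to an extremal multidegree, but closing the general case will, I expect, require either a new combinatorial identity on elementary symmetric polynomials indexed by MDS-compatible set systems, or a matroid-theoretic structural result on how the extensions $Z_i'$ can be chosen so that $\mu^\star$ arises from a \emph{single} permutation. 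For small $m$ (say $m\le 5$, matching the scope announced in the excerpt) an explicit case analysis of the possible intersection patterns among the $Z_i'$ should make $\mathcal{F}$ and its signed sum tractable by hand; pushing beyond is where the genuine hardness of the conjecture resides.
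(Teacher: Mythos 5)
First, be clear about what is being proved: the statement is the GM-MDS \emph{conjecture}, which the paper does not prove in general (it only settles the cases $m\leq 5$ via the TM-MDS reformulation), and your proposal does not prove it either --- your third paragraph candidly leaves the decisive step open. Your first paragraph is correct and coincides with the paper's setup: extend each zero set $Z_i$ to a set $Z_i'$ of size $m-1$ compatible with the MDS condition, take the $i$th row of $T$ to be the coefficient vector of $t_i(x)=\prod_{j\in Z_i'}(x-\alpha_j)$ so that $TV$ fits $M$ symbolically, and reduce everything to showing $\det(T)\not\equiv 0$ followed by a specialization argument over fields of size at least $n+m-1$. This is exactly the known reduction of GM-MDS to TM-MDS that the paper also relies on.

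Where you diverge, and where the genuine gap lies, is in the attack on $\det(T)\not\equiv 0$. You interpret ``polynomial-degree reduction'' as choosing an extremal monomial $\mu^{\star}$ of $\det(T)$ under a suitable monomial order and restricting the Leibniz sum to the family $\mathcal{F}$ of permutations that can produce it; the entire burden then falls on the unproven claim that $\sum_{\sigma\in\mathcal{F}}\mathrm{sgn}(\sigma)\prod_i[\mu^{\star}]T_{i,\sigma(i)}\neq 0$. That claim is not a technical loose end to be filled in later --- it essentially \emph{is} the conjecture, and the extremal-monomial route is close to the combinatorial approaches the paper says collapse past $m=4$ under uncontrolled cancellation. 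The paper's degree reduction is a different operation altogether: it deletes a carefully chosen set $R$ of roots from the sets $N_i$ so that exactly one reduced polynomial retains degree $m-1$, and then differentiates $W(P_1,\dots,P_m)$ with respect to each deleted variable (once per incidence) to conclude $W(\tilde{P}_1,\dots,\tilde{P}_{m-1})\equiv 0$ (Lemma~\ref{lem:RW}); the induction is on the number of polynomials, and the real work (Lemmas~\ref{lem:MaxHD}--\ref{lem:MaxCol}) is showing the deletion can be performed while preserving the generalized non-rectangularity invariant. That preservation is precisely what is established only for $m\leq 5$, which is why even the paper stops short of the general statement. As written, your sketch neither proves the conjecture nor recovers the paper's partial results, because the nonvanishing of the restricted signed sum is asserted rather than established, and no mechanism is offered for controlling the cancellations among the elementary symmetric polynomials over the overlapping sets $Z_i'$.
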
 


Notwithstanding all the efforts by the coding theory community, the GM-MDS conjecture remains still open in general. The GM-MDS conjecture and a simplified version of this conjecture where the supports of rows of $M$ have the same size were shown in~\cite{DSY:2015} to be equivalent (using a generalized version of Hall's theorem). Despite this simplification, there are only three classes of special cases for which this conjecture is theoretically proven: (i) the rows of $M$ are divided into three groups, and the rows in each group have the same support \cite{HHYD:2014}; (ii) the size of intersection of the supports of every two rows of $M$ is upper bounded by $1$ \cite{DSY:2015}; and (iii) the number of rows of $M$ is upper bounded by $4$ \cite{Y:2016}. More importantly, the previously used proof techniques are not applicable to more general cases due to the combinatorial explosion.  
 


One possible approach to find a completion $G$ of $M$ to MDS is to leverage the structure of Generalized Reed-Solomon (GRS) codes \cite{YSZ:2014, DSY:2014} which are known to be MDS. Let $N$ be the set of $n$ independent indeterminates $\alpha_1,\dots,\alpha_n$. Let $M$ be an $m\times n$ binary matrix whose rows' supports have the same size, and let $V = V(N)$ be a generic $m\times n$ Vandermonde matrix with parameters $\alpha_1,\dots,\alpha_n$. Let $T = T(M,N)$ be a generic $m\times m$ transformation matrix such that $TV$ fits $M$, and let $G=TV$. If the evaluations $\alpha^{*}_1,\dots,\alpha^{*}_n$ of $\alpha_1,\dots,\alpha_n$ are distinct, then every $m\times m$ sub-matrix of $G$ is non-singular (i.e., $G$ is a generator matrix of a GRS code with evaluation points $\alpha^{*}_1,\dots,\alpha^{*}_n$) so long as $T$ is non-singular. That is, if $T$ is \emph{not generically singular} (i.e., the determinant of $T$ as a multivariate polynomial in variables $\alpha_1,\dots,\alpha_n$ is not identically zero), then for any field $\mathbb{F}$ of size $q\geq n+m-1$, there exists $\alpha^{*}_1,\dots,\alpha^{*}_n\in\mathbb{F}$ such that $TV$ is a generator matrix of an $[n,m]_q$ MDS code, and $TV$ fits $M$. Thus, the GM-MDS conjecture holds if the following conjecture, proposed independently by Dau \emph{et al.}~\cite{DSY:2014} and Yan \emph{et al.}~\cite{YSZ:2014}, holds: 
\begin{conjecture}[TM-MDS Conjecture]\label{conj:TM-MDS}
If $M$ satisfies the MDS condition, then $T(M,N)$ is not generically singular. 
\end{conjecture}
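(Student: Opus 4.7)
The plan is to induct on a size parameter of $M$ via a single-variable degree-extraction step, after first generalizing the TM-MDS conjecture to accommodate the recursive instances that arise. As a preliminary, I would invoke the Hall-theorem-based reduction of Dau \emph{et al.} to assume that every row of $M$ has support of size exactly $n-m+1$; equivalently, each zero set $Z_i := \{j : M_{i,j}=0\}$ has $|Z_i|=m-1$, and the MDS condition rewrites as $|\bigcap_{i\in I} Z_i|\le m-|I|$ for every nonempty $I\subseteq\{1,\dots,m\}$. Under this normalization, the $i$-th row of $T$ is, up to a harmless row scaling, the coefficient vector of the polynomial $f_i(x) := \prod_{j\in Z_i}(x-\alpha_j)$, and the task reduces to showing that $\det T\in\mathbb{Z}[\alpha_1,\dots,\alpha_n]$ is not the zero polynomial.

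The degree-extraction step goes as follows. For a chosen index $s\in\{1,\dots,n\}$, let $r_s := |\{i:s\in Z_i\}|$. Since $f_i$ is at most linear in $\alpha_s$, with a linear dependence precisely when $s\in Z_i$, the polynomial $\det T$ has degree at most $r_s$ in $\alpha_s$, and a short computation identifies its $\alpha_s^{r_s}$-coefficient as $\pm\det T_s$, where $T_s$ is obtained from $T$ by replacing each row with $s\in Z_i$ by the coefficient vector of $g_i(x):=f_i(x)/(x-\alpha_s)$ (a polynomial of degree $m-2$), padded by a zero in the top slot. To fit $T_s$ into an induction, I would first generalize the TM-MDS conjecture so that its hypothesis allows the row-zero-set sizes $|Z_i|$ to vary below $m-1$, together with an analogous generalized MDS condition; $\det T_s$ is then the transformation-matrix determinant of a strictly smaller instance of this generalized conjecture, obtained from $M$ by deleting column $s$ and shrinking each $Z_i$ that contains $s$. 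Provided a suitable $s$ exists, the inductive hypothesis finishes the proof.

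The main obstacle is thus the purely combinatorial claim that, for every $M$ satisfying the generalized MDS condition, one can always pick a pivot $s$ whose extraction yields a reduced instance that still satisfies the condition and has strictly smaller size. I anticipate handling this by focusing on the \emph{saturated} subsets, namely those $I$ for which $|\bigcap_{i\in I}Z_i|=m-|I|$: these are the sole obstruction to the reduction, and the key lemma will assert that an appropriate $s$ can always be chosen so that no saturated subset is violated in the reduced instance. This is where the proposed technique gains over previous combinatorial approaches, which had to analyze the interaction of all subsets simultaneously rather than one pivot at a time; for $m\le 5$ the finite catalogue of saturated patterns appears small enough to close by case analysis, whereas past that point the catalogue grows sharply and one would need a more structural invariant of the set system $\{Z_i\}$ to continue the induction.
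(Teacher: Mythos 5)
Your strategy is, up to bookkeeping, exactly the one the paper develops: the "generalized TM-MDS conjecture with variable zero-set sizes" is the paper's Conjecture~4 (phrased via polynomials $P_i(\alpha)=\prod_{\gamma\in N_i}(\alpha-\gamma)$ and the GRP/GNRP dichotomy, where your saturated sets $I$ with $|\bigcap_{i\in I}Z_i|=m-|I|$ are precisely the paper's $(r,s)$-subsets with $r+s=m$); your extraction of the $\alpha_s^{r_s}$-coefficient of $\det T$ is the paper's Lemma~1, obtained there by differentiating $W$ the appropriate number of times with respect to each removed variable; and your pivot-selection problem is the paper's "reduction process." The only cosmetic difference is that you extract one variable at a time and induct on the number of columns with $m$ fixed, whereas the paper batches the removals into an acceptable reduction set $R$ chosen so that exactly one polynomial retains degree $m-1$, which collapses the $m\times m$ determinant to an $(m-1)\times(m-1)$ one and lets the induction run on $m$ instead.

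However, the proposal is not a proof of the conjecture, and the gap is exactly the step you flag as "the main obstacle": exhibiting, for every instance satisfying the generalized MDS condition, a pivot (or pivot sequence) whose extraction preserves the condition. The paper states explicitly that designing such a reduction process "is still open in general"; the statement you are addressing is a conjecture that remains open, and the paper only settles it for $m\leq 4$ with arbitrary degrees and for $m=5$ with all degrees equal to $m-1$. Moreover, even for $m\leq 5$ your sketch does not carry out the case analysis you defer to: the paper's resolution there is not a brute-force catalogue of saturated patterns but a specific rule (break a \emph{strongly reducible} highest-order $(r,s)$-subset by removing a root that lies in the largest number of degree-$(m-1)$ polynomials), supported by Lemmas~3--6 showing that the maximal saturated subsets are weakly reducible, pairwise disjoint in the polynomials they occupy, and necessarily broken by the process, plus a delicate ad hoc argument for the $(\tilde d_1,\dots,\tilde d_5)=(2,2,3,3,4)$ branch of Theorem~4. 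None of that content is present in your proposal, so as written it establishes nothing beyond restating the reduction framework; to make it a contribution you would need to either supply the pivot-existence lemma for general $m$ (which would resolve the conjecture) or at least execute the finite analysis for the small-$m$ cases.
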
  





The contributions of this work are as follows. 
First, we present a generalization of the TM-MDS conjecture for the cases where the supports of rows of $M$ have arbitrary sizes. Then, we present an algebraic-combinatorial approach based on polynomial-degree reduction for proving this conjecture. Our technique's strength is primarily due to reducing the inherent combinatorics in the proof. Specifically, we demonstrate this strength by proving the TM-MDS conjecture for the cases where the number of rows of $M$ is upper bounded by $5$.



\section{Basic Notations and Definitions} 
Let $\mathbb{F}$ be a field. For $n\in\mathbb{N}$, let $\alpha_1,\dots,\alpha_n$ be $n$ independent indeterminates. Let $\mathbb{F}[\alpha_1,\dots,\alpha_n]$ be a ring of multivariate polynomials in variables $\alpha_1,\dots,\alpha_n$ with coefficients in $\mathbb{F}$, and let $(\mathbb{F}[\alpha_1,\dots,\alpha_n])[\alpha]$ be a module of univariate polynomials in variable $\alpha$ with coefficients in $\mathbb{F}[\alpha_1,\dots,\alpha_n]$. 
Fix $m\in\mathbb{N}$ and $n\in\mathbb{N}$ such that $1< m\leq n\leq m (m-1)$. For $k\in \mathbb{N}$, denote $\{1,\dots,k\}$ by $[k]$. Define $m$ polynomials $P_1,\dots,P_m$ of degrees $0\leq d_1,\dots,d_m\leq m-1$ in $(\mathbb{F}[\alpha_1,\dots,\alpha_n])[\alpha]$: 
\begin{equation}\label{eq:PDefinition}
P_i(\alpha)\triangleq \prod_{\gamma\in N_i} (\alpha-\gamma), \hspace{5pt}\forall i\in [m],\vspace{-0.5em}
\end{equation} where $N_i$, the set of roots of $P_i$, is a (proper) subset of $N\triangleq \{\alpha_1,\dots,\alpha_n\}$ of size $d_i$. (Note that the roots of $P_i$ are indeterminates.) For $N_i=\emptyset$ and $d_i=0$, $P_i(\alpha)\triangleq 1$. 

Note that $P_i(\alpha) = \sum_{j\in [m]} C_{i,j} \alpha^{j-1}$, where $\{C_{i,j}\}_{j\in[m]}$ are polynomials in $\mathbb{F}[\alpha_1,\dots,\alpha_n]$. Define $W(P_1,\dots,P_m)\triangleq \det((C_{j,i})_{i,j\in [m]})$. Note that $W(P_1,\dots,P_m)$ is a polynomial in $\mathbb{F}[\alpha_1,\dots,\alpha_n]$. 

%






\begin{definition}
A polynomial $W$ in $\mathbb{F}[\alpha_1,\dots,\alpha_n]$ is \emph{identically zero}, denoted by $W\equiv 0$, if the coefficients of all monomials in the polynomial expansion of $W$ are zero.	
\end{definition}

\begin{definition}
A set $\{P_1,\dots,P_m\}$ of $m$ polynomials of degree $m-1$ has \emph{rectangular property} (RP) if, for some $1< k\leq m$, there exist at least $k$ polynomials in $\{P_1,\dots,P_m\}$ with at least $m-k+1$ common roots. Otherwise, $\{P_1,\dots,P_m\}$ has \emph{non-rectangular property} (NRP). 
\end{definition}

\begin{definition}
A set $\{P_1,\dots,P_m\}$ of $m$ polynomials of degrees $0\leq d_1,\dots,d_m\leq m-1$ has \emph{generalized RP} (GRP) if, for some $1< k\leq m$ and $0\leq l\leq m-k$, there exist at least $k$ polynomials of degrees at most $m-l-1$ in $\{P_1,\dots,P_m\}$ with at least $m-k-l+1$ common roots. Otherwise, $\{P_1,\dots,P_m\}$ has \emph{generalized NRP} (GNRP).
\end{definition}

\section{Main Conjectures and Theorems}


The following conjecture is equivalent to the TM-MDS conjecture (Conjecture~\ref{conj:TM-MDS}). 

\begin{conjecture}\label{conj:Special}
Let $P_1,\dots,P_m$ be $m$ polynomials of degree $m-1$ in $(\mathbb{F}[\alpha_1,\dots,\alpha_n])[\alpha]$. If $W(P_1,\dots,P_m)\equiv 0$, then $\{P_1,\dots,P_m\}$ has RP.
\end{conjecture}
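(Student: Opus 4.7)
The plan is to prove the contrapositive: assuming $\{P_1, \ldots, P_m\}$ has NRP, show that $W(P_1, \ldots, P_m) \not\equiv 0$. I would carry out a strong induction on $m$, but conduct the induction in the generalized setting of polynomials of arbitrary degrees $d_i \leq m-1$ satisfying GNRP, since the degree-reduction step will invariably produce polynomial families of mixed degrees for which the plain RP/NRP dichotomy is insufficient.

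The base case $m = 2$ is immediate: if $N_1 = \{\alpha_a\}$ and $N_2 = \{\alpha_b\}$, then $W = \alpha_b - \alpha_a$, which is nonzero in $\mathbb{F}[\alpha_1, \ldots, \alpha_n]$ iff $a \neq b$, exactly the NRP condition. For the inductive step, I would apply \emph{polynomial-degree reduction} as follows. Pick an indeterminate $\alpha_j$ lying in some root sets, say $N_{i_1}, \ldots, N_{i_r}$, so that each $P_{i_s}(\alpha) = (\alpha - \alpha_j)\, P'_{i_s}(\alpha)$ with $\deg P'_{i_s} = d_{i_s}-1$. By performing row operations on the matrix $(C_{j,i})_{i,j \in [m]}$ that do not alter its determinant (essentially extracting the common factor $(\alpha-\alpha_j)$ from the affected rows via subtractions against an auxiliary row), $W$ can be rewritten as a signed sum of $W$-values for reduced polynomial families obtained by replacing certain $P_{i_s}$'s with $P'_{i_s}$'s. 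Applying the inductive hypothesis to each reduced family — which, by a combinatorial argument based on the intersection pattern of the $N_i$'s, still satisfies GNRP — shows that each summand is not identically zero, and a leading-monomial analysis in a suitable lexicographic order on the $\alpha$'s then prevents cancellation across the sum, giving $W \not\equiv 0$.

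The main obstacle is two-fold. First, the combinatorial verification that GNRP is preserved under each reduction step requires careful case analysis on how many root sets contain the chosen $\alpha_j$ and on how the mutual intersections of the $N_i$'s behave after factoring; this is where the slack parameter $l$ in the definition of GRP earns its keep, since the reduced families naturally include polynomials of degrees strictly below $m-1$. Second, controlling cancellation among the reduced summands is the algebraic heart of the argument and motivates the paper's emphasis on reducing inherent combinatorics: without first lowering polynomial degrees, the number of monomials one would need to track grows too quickly. For $m \leq 5$, I expect that both obstacles can be dispatched by explicit enumeration of the admissible GNRP configurations and their reductions, whereas beyond $m = 5$ the number of configurations appears to explode, which is presumably why the technique in its current form delivers only the cases $m \leq 5$.
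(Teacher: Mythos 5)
Your combinatorial skeleton matches the paper's: induct on $m$, work in the generalized setting of arbitrary degrees with GNRP (this is exactly why the paper introduces Conjecture~\ref{conj:General}), reduce degrees by deleting a root from several root sets, and prove that a suitably chosen reduction preserves GNRP. You have also correctly located where the difficulty lives --- the preservation of GNRP under reduction is precisely the step the paper calls ``still open in general'' and settles only for $m\leq 4$ (arbitrary degrees) and $m=5$ (all degrees $m-1$) via its Lemmas~\ref{lem:MaxHD}--\ref{lem:MaxCol}. So as a program your proposal is aligned with the paper, and like the paper it does not prove the conjecture in general; it can at best deliver $m\leq 5$.

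The genuine gap is in your algebraic step. You propose to rewrite $W$ as a \emph{signed sum} of determinants of reduced families (by expanding each affected row after factoring out $(\alpha-\alpha_j)$) and then to rule out cancellation by a leading-monomial argument. That cancellation problem is real and you give no mechanism for solving it: the summands are determinants of families that differ only in which rows lost the factor, so in any fixed monomial order their leading terms can coincide with opposite signs, and knowing each summand is $\not\equiv 0$ says nothing about the sum. The paper avoids this entirely with a differentiation identity (Lemma~\ref{lem:RW}): since each $\alpha_r$ appears with degree at most one in every coefficient $C_{i,j}$, taking the partial derivative of $W$ exactly $n_r$ times in each removed variable $\alpha_r$ (where $n_r$ is the number of root sets containing $\alpha_r$) yields a \emph{single} term, $\pm W(\tilde P_1,\dots,\tilde P_m)$, not a sum; and since any derivative of the identically zero polynomial is identically zero, $W\equiv 0$ immediately forces $W(\tilde P_1,\dots,\tilde P_{m-1})\equiv 0$. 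Note also that the paper's ``acceptable'' reduction sets are chosen so that after reduction only one polynomial retains degree $m-1$, which collapses the $m\times m$ determinant to an $(m-1)\times(m-1)$ one and thereby actually decreases $m$ in the induction; your sketch leaves all $m$ polynomials in play after each factoring step, so it is not clear what quantity your induction descends on. Replacing your row-operation expansion by the derivative identity (and imposing the acceptability condition on the reduction set) would remove both problems and reduce your proposal to the paper's argument, whose remaining content is then purely the combinatorial case analysis you anticipate.
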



The sketch of proof of the equivalency between the TM-MDS conjecture and Conjecture~\ref{conj:Special} follows. Let $M=(M_{i,j})$ be an $m\times n$ binary matrix. Let $M_i$ be the $i$th row of $M$ and let $\mathrm{supp}(M_i)$ be the support of $M_i$. Let $N_i=\{\alpha_j: j\in [n]\setminus \mathrm{supp}(M_i)\}$, where $\alpha_1,\dots,\alpha_n$ are $n$ independent indeterminates. Suppose that all $N_i$ have the same size. Defining $P_i$ ($=P_i(\alpha)$) as in~\eqref{eq:PDefinition}, it follows that the matrix $M$ satisfies the MDS condition iff $\{P_1,\dots,P_m\}$ has NRP (refer to this as Fact 1). Let $G=(G_{i,j})$ be a generic $m\times n$ generator matrix of a Generalized Reed-Solomon (GRS) code with evaluation points $\alpha_1,\dots,\alpha_n$ such that $G$ fits $M$ (i.e., if $M_{i,j}=0$, then $G_{i,j}=0$). Let $V=(V_{i,j})$ be a generic $m\times n$ Vandermonde matrix with parameters $\alpha_1,\dots,\alpha_n$, and let $T=(T_{i,j})$ be a generic $m\times m$ transformation matrix such that $G=TV$. Taking $V_{i,j}=\alpha^{j-1}_i$ and $T_{i,j}=C_{j,i}$, where $P_i(\alpha) = \sum_{j\in [m]} C_{i,j} \alpha^{j-1}$, it follows that $G=TV$ (for more details, see \cite{YSZ:2014}). Since $\det(T)=W(P_1,\dots,P_{m})$, then $\det(T)\not\equiv 0$ (i.e., $T$ is not generically singular) iff $W(P_1,\dots,P_m)\not\equiv 0$ (refer to this as Fact 2). By Facts 1 and 2, the TM-MDS conjecture and Conjecture~\ref{conj:Special} are equivalent. 

%





In the following, we propose a new conjecture which generalizes Conjecture~\ref{conj:Special} for the cases where the degrees $d_1,\dots,d_m$ of polynomials $P_1,\dots,P_m$ are arbitrary. (Conjecture~\ref{conj:General} is equivalent to a generalized version of the TM-MDS conjecture where the supports of rows of $M$ have arbitrary sizes.) 

\begin{conjecture}\label{conj:General}
Let $P_1,\dots,P_m$ be $m$ polynomials of arbitrary degrees $0\leq d_1,\dots,d_m\leq m-1$ in $(\mathbb{F}[\alpha_1,\dots,\alpha_n])[\alpha]$. If $W(P_1,\dots,P_m)\equiv 0$, then $\{P_1,\dots,P_m\}$ has GRP.\end{conjecture}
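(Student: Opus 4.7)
The plan is to reduce Conjecture~\ref{conj:General} to its specialization Conjecture~\ref{conj:Special} by induction on the total degree deficit $\delta \triangleq \sum_{i=1}^m (m-1-d_i)$. In the base case $\delta = 0$, every $P_i$ has degree exactly $m-1$, so the hypothesis $W \equiv 0$ together with Conjecture~\ref{conj:Special} yields RP, which is precisely GRP with $l = 0$. For the inductive step, I would single out a polynomial $P_i$ with $d_i < m-1$, say $P_1$, and attempt to produce an augmented polynomial $\tilde P_1$ of degree $d_1+1$ with $W(\tilde P_1, P_2, \ldots, P_m) \equiv 0$; applying the inductive hypothesis to the augmented system, whose deficit is strictly smaller, yields GRP, and the final step is to pull this conclusion back to $\{P_1,\ldots,P_m\}$.

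The natural candidate is $\tilde P_1(\alpha) = (\alpha - \beta) P_1(\alpha)$. Expanding the determinant in its first coefficient column by multilinearity gives
\begin{equation*}
W(\tilde P_1, P_2, \ldots, P_m) = W(\alpha P_1, P_2, \ldots, P_m) - \beta\, W(P_1, P_2, \ldots, P_m),
\end{equation*}
and the hypothesis $W \equiv 0$ kills the second term, so the right-hand side equals $W(\alpha P_1, P_2, \ldots, P_m)$ and is independent of $\beta$. The inductive step therefore reduces to choosing $\beta$, ideally as one of the $\alpha_j \in N$, so that $W(\alpha P_1, P_2, \ldots, P_m) \equiv 0$ and so that the resulting shared-root structure of $\tilde P_1$ with a subset of the remaining $P_j$'s translates into GRP for the original polynomials with the correct parameters $(k,l)$. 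Concretely, I would analyze the linear dependency $\sum_i \lambda_i P_i \equiv 0$ witnessing $W \equiv 0$, with $\lambda_i \in \mathbb{F}[\alpha_1,\ldots,\alpha_n]$ of $\gcd$ one, and exploit the combinatorial structure of which indeterminate appears in which $N_i$ to locate a $\beta \in N$ that is simultaneously a root of $P_i$ for the polynomials appearing in the active support of the dependency, so that $\beta$ both preserves the vanishing of the determinant after the degree bump and witnesses a common root of a subset of the $P_i$'s.

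The main obstacle is precisely this identification of $\beta$: a fresh indeterminate does not in general preserve $W \equiv 0$, as one already sees in small instances, so the reduction is not purely formal and requires a careful algebraic-combinatorial argument that propagates the dependency through degree reduction and bookkeeps how factoring $(\alpha-\beta)$ out of a selected subset of the $P_i$'s forces a nested chain of common roots of the appropriate multiplicities to produce GRP with the right $(k,l)$. A secondary point is that the base case invokes Conjecture~\ref{conj:Special}, which this paper establishes only for $m \leq 5$; the reduction therefore yields Conjecture~\ref{conj:General} unconditionally for $m \leq 5$ and conditionally on Conjecture~\ref{conj:Special} for larger $m$.
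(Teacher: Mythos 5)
The statement you set out to prove is Conjecture~\ref{conj:General}, which the paper itself leaves open: it establishes it only for $m\leq 4$ (Theorems~\ref{thm:FirstSp}--\ref{thm:ThirdSp}), plus the equal-degree case for $m=5$. The paper's proof framework also runs in the opposite direction from yours: it \emph{lowers} degrees rather than raising them. Lemma~\ref{lem:RW} shows that removing an ``acceptable'' set of roots $R$ (so that exactly one polynomial retains degree $m-1$) preserves $W\equiv 0$, because the reduced determinant is, up to sign, the iterated derivative of $W$ with respect to the variables in $R$; this passes from $m$ polynomials to $m-1$ and sets up an induction on $m$, with all of the real work going into designing a reduction process under which GNRP is preserved.

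Your induction on the degree deficit has a fatal gap at exactly the point you flag, and the gap is worse than you suggest. As you compute, $W(\tilde P_1,P_2,\dots,P_m)=W(\alpha P_1,P_2,\dots,P_m)-\beta\,W(P_1,\dots,P_m)$, and the second term vanishes by hypothesis, so the augmented determinant equals $W(\alpha P_1,P_2,\dots,P_m)$ --- a polynomial in which $\beta$ does not occur at all. There is therefore nothing to choose: either this polynomial is identically zero or it is not, and no selection of $\beta\in N$, however clever, can affect that. And it need not be zero: take $m=3$, $P_1=P_2=\alpha-\alpha_1$, $P_3=(\alpha-\alpha_2)(\alpha-\alpha_3)$. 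Then $W(P_1,P_2,P_3)\equiv 0$ (two equal columns), but a direct computation gives $W(\alpha P_1,P_2,P_3)=-(\alpha_1-\alpha_2)(\alpha_1-\alpha_3)\not\equiv 0$, and by symmetry the same holds if you augment $P_2$ instead; so no single-step augmentation preserves the vanishing, and the inductive step cannot even begin. To salvage the approach you would have to show that whenever every augmentation fails, GRP already holds, which is a claim at least as hard as the conjecture itself. Consequently your closing assertion that the argument yields Conjecture~\ref{conj:General} unconditionally for $m\leq 5$ does not stand; to match what the paper actually does, use the degree-reducing derivative identity of Lemma~\ref{lem:RW} and induct on $m$.
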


If $d_i<m-1$ for all $i\in [m]$, then Conjecture~\ref{conj:General} holds trivially: (i) $W(P_1,\dots,P_m)\equiv 0$ since $C_{i,m}= 0$ for all $i\in [m]$, and (ii) $\{P_1,\dots,P_m\}$ has GRP since for $k=m$ and $l=1$, there exist $k$ polynomials of degrees at most $m-l-1$ in $\{P_1,\dots,P_m\}$ with at least $m-k-l+1$ common roots. Hereafter, w.l.o.g., we assume $d_i= m-1$ for some $i\in [m]$. 

The following theorems, which are our main results, prove the GM-MDS conjecture for $m\leq 5$. More specifically, Theorems~\ref{thm:FirstSp},~\ref{thm:SecondSp}, and~\ref{thm:ThirdSp} settle Conjecture~\ref{conj:General} (and so Conjecture~\ref{conj:Special}) for $m\leq 4$, and Theorem~\ref{thm:FourthSp} settles Conjecture~\ref{conj:Special} for $m=5$. 

\begin{theorem}\label{thm:FirstSp}
For any $P_1,P_2$ such that $0\leq d_1\leq d_2=1$, if $W(P_1,P_2)\equiv 0$, then $\{P_1,P_2\}$ has GRP. 	
\end{theorem}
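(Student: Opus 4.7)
The plan is to dispose of this base case by a short case analysis on the possible value of $d_1 \in \{0,1\}$, since $m=2$ and $d_2 = 1$ leave very little room. The key reduction is that for $m = 2$ the definition of $W$ produces an explicit $2 \times 2$ determinant in the coefficients $C_{i,j}$, and the polynomials $P_1, P_2$ themselves have at most two coefficients each, so the Wronskian-type quantity can be written down directly.

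First I would handle the degenerate case $d_1 = 0$. Here $P_1(\alpha) = 1$ and $P_2(\alpha) = \alpha - \gamma$ for some $\gamma \in N$, giving coefficient matrix with $C_{1,1}=1$, $C_{1,2}=0$, $C_{2,1}=-\gamma$, $C_{2,2}=1$. A one-line computation shows $W(P_1,P_2) = 1$, which is not identically zero. Hence the hypothesis $W \equiv 0$ is unattainable in this subcase, and the implication in the theorem holds vacuously.

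Next I would handle the remaining case $d_1 = d_2 = 1$, where $P_1(\alpha) = \alpha - \gamma_1$ and $P_2(\alpha) = \alpha - \gamma_2$ with $\gamma_1, \gamma_2 \in N$. A direct evaluation gives $W(P_1,P_2) = \gamma_2 - \gamma_1$. Here I would invoke the crucial observation that $\gamma_1$ and $\gamma_2$ are drawn from the set $N = \{\alpha_1, \ldots, \alpha_n\}$ of algebraically independent indeterminates, so $\gamma_2 - \gamma_1 \equiv 0$ in $\mathbb{F}[\alpha_1,\dots,\alpha_n]$ forces $\gamma_1$ and $\gamma_2$ to be the same variable. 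Thus $P_1$ and $P_2$ share the common root $\gamma_1$, and GRP is witnessed by $k=2$, $l=0$: two polynomials of degree at most $m-l-1 = 1$ with at least $m-k-l+1 = 1$ common root.

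The only potential pitfall is conceptual rather than computational: one must carefully distinguish between the coefficients of $P_i$ being algebraically equal in $\mathbb{F}[\alpha_1,\dots,\alpha_n]$ and merely taking equal values after a specialization. Because the $\gamma_i$'s belong to a set of independent indeterminates, identical vanishing of a linear expression in them is tantamount to syntactic equality of the root labels, which is precisely what converts ``$W \equiv 0$'' into ``a shared common root among the $P_i$'s.'' This observation will continue to be the workhorse in the higher-$m$ theorems, so it is worth spelling out even in this easy base case.
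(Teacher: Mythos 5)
Your proof is correct and follows essentially the same route as the paper: the $(0,1)$ case gives $W=1\not\equiv 0$ (vacuous), and the $(1,1)$ case gives $W=P_1-P_2=\gamma_2-\gamma_1$, whose identical vanishing over independent indeterminates forces $P_1=P_2$ and hence GRP. The only difference is that you spell out the indeterminate-versus-specialization point explicitly, which the paper leaves implicit.
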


\begin{theorem}\label{thm:SecondSp}
For any $P_1,P_2,P_3$ such that $0\leq d_1\leq d_2\leq d_3=2$, if $W(P_1,P_2,P_3)\equiv 0$, then $\{P_1,P_2,P_3\}$ has GRP.
\end{theorem}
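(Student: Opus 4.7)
My plan is to prove Theorem~\ref{thm:SecondSp} by a case split on $(d_1,d_2,d_3)$, using polynomial-degree reduction via column operations on the coefficient matrix underlying $W$, followed by a short combinatorial analysis of how the root sets $N_1,N_2,N_3$ can overlap. Throughout I assume $d_3=2$, since otherwise all $d_i<m-1$ and the claim is trivially true by the observation following Conjecture~\ref{conj:General}.

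In the easier sub-family $d_2\leq 1$, the polynomial $P_3$ is the unique polynomial of degree exactly $2$ in the triple, so the linear dependence over $\mathbb{F}(\alpha_1,\ldots,\alpha_n)$ guaranteed by $W\equiv 0$ must, upon reading off the coefficient of $\alpha^2$, assign weight zero to $P_3$. What remains is a non-trivial linear dependence of $\{P_1,P_2\}$, and Theorem~\ref{thm:FirstSp} (applied to this pair together with the trivial case of two constant polynomials) produces a GRP witness for $\{P_1,P_2\}$ that lifts directly to a GRP witness for $\{P_1,P_2,P_3\}$, via the parameter choice $(k,l)=(2,1)$ or via the equality of a pair of polynomials.

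In the harder sub-family $d_2=2$, I apply the degree reduction: replace $P_2$ by $\tilde P_2:=P_2-P_3$ and, if $d_1=2$, replace $P_1$ by $\tilde P_1:=P_1-P_3$; otherwise set $\tilde P_i:=P_i$. These column operations preserve $W$ but force $\deg\tilde P_1,\deg\tilde P_2\leq 1$, so the same leading-coefficient argument now forces $\{\tilde P_1,\tilde P_2\}$ to be linearly dependent over $\mathbb{F}(\alpha_1,\ldots,\alpha_n)$. If some $\tilde P_i$ vanishes identically (which can happen only when $d_i=2$) then $P_i=P_3$ and GRP is immediate via the pair $(P_i,P_3)$; otherwise $\tilde P_2=c\,\tilde P_1$ for some nonzero $c\in\mathbb{F}(\alpha_1,\ldots,\alpha_n)$.

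The main obstacle is translating this proportionality back into a combinatorial statement about $N_1,N_2,N_3$, since the unique root of a degree-$1$ polynomial $\tilde P_i$ is a rational function in the $\alpha_j$'s rather than one of the indeterminates themselves. I would bridge this gap by viewing $W$ as a univariate polynomial in a specific indeterminate $\gamma_{3,1}\in N_3$. When $\gamma_{3,1}$ can be chosen outside $N_1\cup N_2$, a direct expansion identifies the coefficient of $\gamma_{3,1}$ in $W$ with $P_1(\gamma_{3,2})-P_2(\gamma_{3,2})$; its vanishing, by matching elementary symmetric functions of distinct indeterminates, forces either $N_1=N_2$ or $\gamma_{3,2}\in N_1\cap N_2\cap N_3$, both of which deliver GRP. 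The complementary regime $N_3\subseteq N_1\cup N_2$ is settled by a short sub-analysis: either $N_3=N_1$ or $N_3=N_2$, in which case GRP is immediate, or $N_3$ takes one element from each of $N_1\setminus N_2$ and $N_2\setminus N_1$, in which case an explicit factorization of $W$ as a nonzero product of pairwise differences of distinct indeterminates contradicts $W\equiv 0$. The translation from algebraic identity to combinatorial constraint in this last step is where I expect the argument to require the most care.
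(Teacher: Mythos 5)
Your proposal is correct, but it takes a genuinely different route from the paper. The paper proves this theorem by running its root-removal reduction process on $\{P_1,P_2,P_3\}$, invoking Lemma~\ref{lem:RW} (differentiation of $W$ with respect to the removed indeterminates) to get $W(\tilde P_1,\tilde P_2)\equiv 0$, applying Theorem~\ref{thm:FirstSp}, and then deriving a contradiction from Lemmas~\ref{lem:GRPOrder}, \ref{lem:ProperCol} and~\ref{lem:MaxCol} (the strongly reducible subset would have been broken). You instead exploit that $W\equiv 0$ is equivalent to linear dependence of $P_1,P_2,P_3$ over $\mathbb{F}(\alpha_1,\dots,\alpha_n)$, perform the determinant-preserving operation $P_i\mapsto P_i-P_3$ (which is conceptually distinct from the paper's reduction, since $P_i-P_3$ is no longer a product of linear factors in the indeterminates), and then close the gap you correctly flag by extracting coefficients of $W$ with respect to a root of $P_3$ outside $N_1\cup N_2$, or by factoring $W$ explicitly when $N_3\subseteq N_1\cup N_2$; I checked the key identities (e.g., for $(d_1,d_2,d_3)=(2,2,2)$ the coefficient of $\gamma_{3,1}$ in $W$ is indeed $P_1(\gamma_{3,2})-P_2(\gamma_{3,2})$, and in the residual case $W$ factors as a nonzero product of differences of distinct indeterminates) and they hold. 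What your approach buys is a self-contained, elementary proof for $m=3$ that needs none of the reduction-process machinery; what it gives up is scalability, since the explicit factorizations and overlap case analysis are exactly the combinatorial explosion the paper's lemmas are built to suppress for $m=4,5$. Two small cautions: your formula for the coefficient of $\gamma_{3,1}$ is specific to $d_1=d_2=2$ and must be adjusted (to $P_1(\gamma_{3,2})$ alone, say) when $d_1=1$; and your degenerate case $d_1=d_2=0$ relies on reading the GRP condition with $m-k-l+1\leq 0$ as vacuously satisfied, which is the same reading the paper itself uses, so it is not a defect of your argument relative to the paper's.
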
  

\begin{theorem}\label{thm:ThirdSp}
For any $P_1,\dots,P_4$ such that $0\leq d_1\leq \dots\leq d_4=3$, if $W(P_1,\dots,P_4)\equiv 0$, then $\{P_1,\dots,P_4\}$ has GRP.
\end{theorem}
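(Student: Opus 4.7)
The plan is to prove the contrapositive by induction on the total degree $D = d_1 + d_2 + d_3 + d_4$, with Theorems~\ref{thm:FirstSp} and~\ref{thm:SecondSp} as the base cases for $m \leq 3$. Assuming $\{P_1,\dots,P_4\}$ has GNRP, I aim to show $W(P_1,\dots,P_4) \not\equiv 0$, following the outline set up by the preceding theorems together with the polynomial-degree reduction technique advertised in the introduction.

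The first stage is to handle the degree-heterogeneous cases where some $d_i < m-1 = 3$. The simplest subcase is $d_1 = 0$: then $P_1 = 1$ and a cofactor expansion along the first column of $(C_{j,i})_{i,j\in [4]}$ reduces $W(P_1,\dots,P_4)$ to a $3 \times 3$ subdeterminant formed from the non-constant coefficients of $P_2,P_3,P_4$. I would interpret the vanishing of this subdeterminant as an affine linear dependence among the three cubics and combine it with GNRP to derive a contradiction via Theorem~\ref{thm:SecondSp}. For intermediate degree patterns, I would multiply a chosen low-degree $P_i$ by an auxiliary factor $(\alpha - \alpha_\ast)$, exploit the linearity of the Wronskian-type determinant in the column corresponding to $P_i$, and transfer the vanishing of $W$ to a system of strictly smaller total degree whose GNRP is inherited from the original; the inductive hypothesis then closes this range of cases.

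The remaining, and hardest, regime is $d_1 = d_2 = d_3 = d_4 = 3$, i.e., the original $m = 4$ instance of Conjecture~\ref{conj:Special}. All four polynomials are monic cubics with indeterminate roots drawn from $N$, and GNRP forbids any two $N_i$'s from being equal, any three $N_i$'s from sharing two elements, and all four $N_i$'s from sharing one element. I would unpack $W \equiv 0$ as a linear dependence $\sum_i \lambda_i P_i \equiv 0$ over $\mathbb{F}(\alpha_1,\dots,\alpha_n)$ and then apply the degree-reduction step: pick a root $\gamma \in N_i$ that, by GNRP, does not lie in enough other $N_j$'s, substitute $\alpha = \gamma$ into the dependence to annihilate $P_i$, and thereby extract a vanishing Wronskian-type relation among three cubics controlled by Theorem~\ref{thm:SecondSp}.

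The principal obstacle is this degree-reduction step: each reduction must simultaneously preserve the vanishing of the Wronskian and the structural requirement that every polynomial in the reduced system is a product of linear factors with roots in $N$, and must not collapse the system into a configuration that fails GNRP. Unlike proofs that enumerate the configurations of $(N_1,N_2,N_3,N_4)$ exhaustively, the technique here is meant to funnel every GNRP configuration into a single inductive scheme; verifying that the reduction is well defined in every case and that the reduced system inherits GNRP is the combinatorial-algebraic core of the argument.
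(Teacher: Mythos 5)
Your high-level strategy --- contradiction plus a degree-reduction that funnels the problem into Theorem~\ref{thm:SecondSp} --- is the right one, but the two steps you yourself flag as the ``core'' are exactly the steps that carry all the content, and as sketched they do not work. First, the mechanism you propose for transferring $W\equiv 0$ to a reduced system is not sound: writing $W\equiv 0$ as a dependence $\sum_i\lambda_iP_i\equiv 0$ and substituting $\alpha=\gamma$ for some $\gamma\in N_i$ annihilates $P_i$ but merely \emph{evaluates} the other three polynomials at $\gamma$; it yields one scalar identity in $\mathbb{F}[\alpha_1,\dots,\alpha_n]$, not a vanishing determinant condition on three polynomials of degree $\leq 2$ that are still products of linear factors with roots in $N$. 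The correct tool (Lemma~\ref{lem:RW}) is differentiation of $W$ with respect to each removed indeterminate $\alpha_r$, taken $n_r$ times where $n_r$ counts the polynomials containing $\alpha_r$; via the Leibniz formula this strips the factor $(\alpha-\alpha_r)$ from every $N_i$ containing it while preserving $W\equiv 0$, and it requires the reduction set to be \emph{acceptable} (exactly one polynomial left at degree $m-1$) so that the last row and column collapse. Your $(\alpha-\alpha_\ast)$ padding for intermediate degrees has the same problem in reverse: it increases degrees rather than decreasing total degree, and you do not verify that the padded system remains of the required form or inherits GNRP.

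Second, the combinatorial half --- which roots to remove so that a contradiction emerges --- is entirely deferred. Note that the argument does not in fact need the reduced system to inherit GNRP (the paper explicitly leaves that design problem open in general); instead it runs the implication the other way: any GRP witness in $\{\tilde P_1,\tilde P_2,\tilde P_3\}$, i.e., a $(3,1)$- or $(2,2)$-subset, pulls back to an $(r,s)$-subset of $\{P_1,P_2,P_3\}$ with $r+s=4$, which is shown to be strongly reducible (Lemmas~\ref{lem:GRPOrder}--\ref{lem:ProperCol}) and hence must have been broken by the process (Lemma~\ref{lem:MaxCol}) --- contradicting its survival. Making this work requires the specific greedy process (always break a highest-order weakly reducible subset via a removable element lying in the most degree-$(m-1)$ polynomials) and the disjointness lemma (Lemma~\ref{lem:DisjointMax}); it also requires a separate ad hoc argument for degree patterns like $(2,2,3,3)$ reducing to $(1,1,2,3)$, where one must exhibit a $(2,2)$-subset in $\{P_1,P_2\}$ directly. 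Without supplying this machinery, your proposal is an outline of where a proof should go rather than a proof.
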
  


\begin{theorem}\label{thm:FourthSp}
For any $P_1,\dots,P_5$ such that $d_1=\dots=d_5=4$, if $W(P_1,\dots,P_5)\equiv 0$, then $\{P_1,\dots,P_5\}$ has GRP. 	
\end{theorem}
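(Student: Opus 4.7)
The plan is to prove the contrapositive: assuming $\{P_1,\dots,P_5\}$ has NRP and, for contradiction, $W(P_1,\dots,P_5)\equiv 0$, we derive an RP-configuration. Vanishing of $W$ makes the coefficient vectors of the $P_i$'s (in the basis $1,\alpha,\alpha^2,\alpha^3,\alpha^4$) linearly dependent over $\mathbb{F}(\alpha_1,\dots,\alpha_n)$; clearing denominators and extracting common polynomial factors yields a coprime, nontrivial relation
\[\sum_{i=1}^{5}\lambda_i(\alpha_1,\dots,\alpha_n)\,P_i(\alpha)\equiv 0\]
in $(\mathbb{F}[\alpha_1,\dots,\alpha_n])[\alpha]$. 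The task is to translate this relation into one of the incidence patterns that NRP explicitly forbids on the $4$-element sets $N_1,\dots,N_5\subseteq N$.

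The first step I would take is evaluation-based pruning. For every $\alpha_j\in N$, substituting $\alpha=\alpha_j$ collapses the relation to $\sum_{i:\,\alpha_j\notin N_i}\lambda_i\prod_{\gamma\in N_i}(\alpha_j-\gamma)=0$ in the integral domain $\mathbb{F}[\alpha_1,\dots,\alpha_n]$. NRP forces $|S_j|:=|\{i:\alpha_j\in N_i\}|\leq 4$, and whenever $|S_j|=4$ exactly one $\lambda_{i^\star}$ survives in the identity, multiplied by a nonzero product of distinct indeterminate differences, so $\lambda_{i^\star}\equiv 0$. Iterating this elimination either drives the residual relation below five nonzero terms — which, combined with the identity $\sum_i\lambda_i=0$ obtained by equating the coefficient of $\alpha^{4}$ in the original relation, brings the surviving subsystem into the scope of Theorem~\ref{thm:ThirdSp} and yields a GRP-witness that contradicts either NRP upstream or the coprimality of the $\lambda_i$'s — or places us in the regime where $|S_j|\leq 3$ for every $j$.

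In this remaining regime I would invoke the polynomial-degree-reduction step. Pick an $\alpha_k\in N$ contained in a maximum number of $N_i$'s, indexed by $T\subseteq[5]$, and factor $P_i=(\alpha-\alpha_k)Q_i$ for each $i\in T$, with $\deg Q_i=3$. The relation rewrites as $(\alpha-\alpha_k)\sum_{i\in T}\lambda_iQ_i+\sum_{i\notin T}\lambda_iP_i\equiv 0$, so reading modulo $(\alpha-\alpha_k)$ yields the scalar identity $\sum_{i\notin T}\lambda_iP_i(\alpha_k)\equiv 0$ in $\mathbb{F}[\alpha_1,\dots,\alpha_n]$, while reading the quotient by $(\alpha-\alpha_k)$ yields a second identity relating the $Q_i$'s ($i\in T$) to the degree-$3$ Taylor remainders of the $P_i$'s ($i\notin T$). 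Recasting this quotient identity as an instance of Conjecture~\ref{conj:General} at strictly smaller $\sum_i d_i$ and invoking Theorems~\ref{thm:FirstSp}--\ref{thm:ThirdSp} supplies a GRP-witness, which, on lifting back through the factorization, becomes either the forbidden RP-configuration on $\{P_1,\dots,P_5\}$ or a nontrivial common factor of the $\lambda_i$'s — a contradiction in either case.

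The principal obstacle lies in this lifting. The quotient identity naturally involves Taylor remainders $(P_i(\alpha)-P_i(\alpha_k))/(\alpha-\alpha_k)$ for $i\notin T$, and these are not of the pure product-of-indeterminates form required by Conjecture~\ref{conj:General}, so a careful algebraic re-expression — or an alternative specialization such as identifying two of the $\alpha_j$'s to trim the variable count while preserving the NRP bookkeeping on $N_1,\dots,N_5$ — is needed to land inside the inductive hypothesis. Moreover, the parameters $(k,l)$ supplied by a GRP-witness on the reduced, mixed-degree family must be matched against the tight $5-k$ bounds of NRP on the original family, and one must certify that each round of reduction strictly decreases a well-chosen combinatorial potential (for example $\max_j|S_j|$ or $\sum_i d_i$) so the process terminates. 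Managing these two issues in tandem — preserving the NRP (equivalently, MDS) hypothesis under each reduction while strictly shrinking it — is precisely the ``reducing inherent combinatorics'' promised by the abstract, and is the reason the method, as currently developed, stops at $m=5$.
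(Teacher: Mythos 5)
Your proposal does not close the induction, and the obstacle you yourself flag at the end is fatal to the route you chose. Theorems~\ref{thm:FirstSp}--\ref{thm:ThirdSp} (and Conjecture~\ref{conj:General}) apply only to polynomials of the special form $\prod_{\gamma\in N_i}(\alpha-\gamma)$ with $N_i\subseteq N$, but both of your reduction steps leave that class. When you quotient the relation by $(\alpha-\alpha_k)$, the terms with $i\notin T$ become Taylor remainders $(P_i(\alpha)-P_i(\alpha_k))/(\alpha-\alpha_k)$, which do not split into linear factors over the indeterminates; and when you eliminate a $\lambda_{i^\star}$ via a $|S_j|=4$ evaluation, the surviving four polynomials still have degree $4$, so they are not ``in the scope of Theorem~\ref{thm:ThirdSp}'' (subtracting one from the others to drop the degree destroys the product form again). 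The paper's way around exactly this is Lemma~\ref{lem:RW}: rather than factoring $(\alpha-\alpha_k)$ out of a linear relation, one differentiates the determinant $W$ with respect to the indeterminate $\alpha_r$, $n_r$ times, which removes $\alpha_r$ from \emph{every} $N_i$ simultaneously and keeps all polynomials inside the product-form class, so the induction hypothesis genuinely applies to $W(\tilde{P}_1,\dots,\tilde{P}_{m-1})$. You would need an analogue of that lemma for your evaluation/quotient identities, and none is supplied.

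The second, larger gap is the combinatorial half. Even granting a valid degree reduction, the induction yields a contradiction only if the reduced family still has GNRP whenever the original one does; your proposal gestures at a ``well-chosen potential'' and at ``matching the $(k,l)$ parameters'' but gives no mechanism, and this is where essentially all of the paper's work lies. The paper schedules the reduction greedily by breaking strongly reducible $(r,s)$-subsets of maximal order (Lemmas~\ref{lem:MaxHD}--\ref{lem:MaxCol} guarantee every dangerous subset with $r+s=m$ gets broken), then runs a case analysis over the reachable degree sequences $(\tilde{d}_1,\dots,\tilde{d}_5)$; the hardest case, $(2,2,3,3,4)$ with a residual $(2,2)$-subset in $\{\tilde{P}_1,\tilde{P}_2\}$, is dispatched only by reconstructing exactly which roots were removed and then applying Lemma~\ref{lem:RW} a second time with a hand-built acceptable reduction set $\{\beta_3,\beta_6,\beta_7\}$. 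None of this is present in, or replaceable by, the manipulations you describe, so the proposal as written is a program rather than a proof.
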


\section{Main Ideas and Lemmas}
In this section, we explain the main ideas and state the useful lemmas for the proofs of our main results. 


Consider an arbitrary set $\{P_i\}$ ($=\{P_i\}_{1\leq i\leq m}$) of $m$ polynomials $P_i$ (with the sets of roots $N_i$) such that $0\leq d_i\leq m-1$ for all $i\in [m]$, and $d_i=m-1$ for some $i\in [m]$. Define a class of \emph{reduction processes} over $\{P_i\}$, where any process in this class is associated with a unique \emph{reduction set} $R\subseteq N$, and it reduces $P_i(\alpha)=\prod_{
\gamma\in N_i} (\alpha-\gamma)$ to $\tilde{P}_i(\alpha)\triangleq \prod_{\gamma\in N_i\setminus R} (\alpha-\gamma)$. Let $\tilde{d}_i\triangleq \mathrm{deg}(\tilde{P}_i)$. Note that $\tilde{d}_i=d_i-|N_i\cap R|$. Restrict your attention to those reduction sets $R$ such that $\tilde{d}_j=m-1$ for some $j\in [m]$, and $\tilde{d}_i<m-1$ for all $i\in[m]\setminus \{j\}$. Such $R$ are referred to as \emph{acceptable}. For any acceptable reduction set, w.l.o.g., assume that $\tilde{d}_i<m-1$ for all $i\in [m-1]$ and $\tilde{d}_m=m-1$ (and so, $d_m=m-1$ since $\tilde{d}_m\leq d_m\leq m-1$). For any (acceptable) reduction set, the following result holds. 

\begin{lemma}\label{lem:RW}
If $W(P_1,\dots,P_m)\equiv 0$, then $W(\tilde{P}_1,\dots,\tilde{P}_{m-1})\equiv 0$. 
\end{lemma}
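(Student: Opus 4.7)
The plan is to view $W(P_1,\dots,P_m)$ as a polynomial in the $|R|$ indeterminates belonging to $R$ and to extract one specific coefficient---the leading one with respect to the $R$-variables---that I will identify, up to a sign, with $W(\tilde P_1,\dots,\tilde P_{m-1})$. Once that identity is in hand, the conclusion is immediate: if $W(P_1,\dots,P_m)$ is identically zero, every coefficient of its monomial expansion vanishes, including the distinguished one.

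First I would establish a degree bound. Each $P_k=\prod_{\delta\in N_k}(\alpha-\delta)$ is multilinear in the indeterminates $N_k$, so every entry $A_{j,k}=[\alpha^{j-1}]P_k$ of the $m\times m$ coefficient matrix $A$ has degree at most one in each $\alpha_l\in N_k$. By multilinearity of the determinant in its columns, $W(P_1,\dots,P_m)=\det A$ has degree at most $|\{k:\alpha_l\in N_k\cap R\}|$ in each $\alpha_l\in R$. Next I would extract the coefficient of the top monomial $\prod_{\alpha_l\in R}\alpha_l^{|\{k:\alpha_l\in N_k\cap R\}|}$ in $\det A$: whenever $\alpha_l\in N_k$, the identity $[\alpha^{j-1}]P_k=[\alpha^{j-2}](P_k/(\alpha-\alpha_l))-\alpha_l\cdot[\alpha^{j-1}](P_k/(\alpha-\alpha_l))$ shows that the coefficient of $\alpha_l$ in column $k$ of $A$ is minus the coefficient vector of $P_k/(\alpha-\alpha_l)$, while columns with $\alpha_l\notin N_k$ are unaffected. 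Iterating this column operation over all $\alpha_l\in R$---the iterations commute because each $P_k$ is separately multilinear in its root set and the operations for distinct $\alpha_l\in R$ act on different coordinates of $N_k$---replaces column $k$ of $A$ by the coefficient vector of $\tilde P_k=P_k/Q_k$, with an overall sign $(-1)^{\sum_k|N_k\cap R|}$. Denoting the resulting $m\times m$ matrix by $A''$, the extracted coefficient equals $(-1)^{\sum_k|N_k\cap R|}\det A''$.

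Finally I would evaluate $\det A''$ by Laplace expansion along its last row. By acceptability of $R$ we have $\tilde d_k<m-1$ for $k\in[m-1]$ and $\tilde d_m=m-1$; since each $\tilde P_k$ is monic, the $(m,k)$-entry of $A''$ is $0$ for $k<m$ and $1$ for $k=m$. Hence $\det A''$ equals the $(m-1)\times(m-1)$ minor obtained by deleting the last row and column of $A''$, which is exactly the coefficient matrix of $\tilde P_1,\dots,\tilde P_{m-1}$ viewed as polynomials of degree at most $m-2$---that is, $W(\tilde P_1,\dots,\tilde P_{m-1})$. Combining the steps, $(-1)^{\sum_k|N_k\cap R|}\,W(\tilde P_1,\dots,\tilde P_{m-1})$ is a specific coefficient in the monomial expansion of $W(P_1,\dots,P_m)$, so the hypothesis $W(P_1,\dots,P_m)\equiv 0$ forces $W(\tilde P_1,\dots,\tilde P_{m-1})\equiv 0$. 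The main obstacle I anticipate is the careful bookkeeping behind the iterated extraction---in particular verifying commutativity of the successive column operations and pinning down the exact sign count---but this hinges only on the multilinearity of each $P_k$ in $N_k$, and can be checked either by a single global argument or by induction on $|R|$.
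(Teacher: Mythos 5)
Your proof is correct, and it rests on the same underlying identity as the paper's: up to the sign $(-1)^{\sum_k|N_k\cap R|}$, $W(\tilde P_1,\dots,\tilde P_{m-1})$ is the leading coefficient of $W(P_1,\dots,P_m)$ with respect to the variables in $R$, and the final collapse from the $m\times m$ determinant to the $(m-1)\times(m-1)$ one via the last row is identical in both arguments. The difference is in how that leading coefficient is reached: the paper differentiates $n_r$ times in each $\alpha_r\in R$ and invokes the Leibniz formula, whereas you extract the coefficient of $\prod_{\alpha_l\in R}\alpha_l^{n_l}$ directly by column-multilinearity of the determinant. Your route is slightly more robust: the iterated derivative actually produces an extra factor $\prod_{r}n_r!$ (absent from the paper's displayed formula), which over a field of small positive characteristic could vanish and would then break the implication $W^{(n_R)}\equiv 0\Rightarrow W(\tilde P_1,\dots,\tilde P_m)\equiv 0$; direct coefficient extraction sidesteps this entirely, since a coefficient of an identically zero polynomial is identically zero regardless of characteristic. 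Your remaining bookkeeping (commutativity of the column operations for distinct $\alpha_l$, and the sign count) is justified exactly as you say, by separate multilinearity of each $P_k$ in its root set.
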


\begin{proof}
Consider the resulting $\{\tilde{P}_i\}$ from $\{P_i\}$ for an arbitrary (acceptable) reduction set $R=\{\alpha_{r_1},\dots,\alpha_{r_{|R|}}\}$. Let $r_R \triangleq \{r_1,\dots,r_{|R|}\}$. For any $r\in r_R$, let $n_{r}$ be the number of polynomials $P_i$ in $\{P_i\}$ such that $\alpha_{r}\in N_i$, and let $n_R\triangleq \{n_{r_1},\dots,n_{r_{|R|}}\}$. Let $W^{(n_R)}(P_1,\dots,P_m)$ be the resulting polynomial from $W(P_1,\dots,P_m)$ by taking derivative $n_r$ times with respect to each variable $\alpha_r\in R$. (Since $W(P_1,\dots,P_m)=\det((C_{j,i})_{i,j\in [m]})$, and $C_{j,i}$ is the sum of monomials $(-1)^{e_1+\dots+e_n}\alpha^{e_1}_1\cdots \alpha^{e_n}_n$ for some $\{e_1,\dots,e_n\}\in \{0,1\}^{n}$ (depending on $i,j$), then the derivatives of $C_{j,i}$ with respect to any variable $\alpha_r$ are independent of $\mathbb{F}$.) Note that $W^{(n_R)}(P_1,\dots,P_m)=(-1)^{n_{r_1}+\dots+n_{r_{|R|}}} W(\tilde{P}_1,\dots,\tilde{P}_m)$ (by using the Leibniz formula for determinant), and $W(\tilde{P}_1,\dots,\tilde{P}_m)= W(\tilde{P}_1,\dots,\tilde{P}_{m-1})$ (since $\tilde{d}_i<m-1$ and $\tilde{C}_{i,m}= 0$ for all $i\in [m-1]$, and $\tilde{d}_m=m-1$ and $\tilde{C}_{m,m} = 1$, where $\tilde{P}_i(\alpha) = \sum_{j\in [m]} \tilde{C}_{i,j} \alpha^{j-1}$). Since $W(P_1,\dots,P_m)\equiv 0$ (by assumption), then $W^{(n_R)}({P}_1,\dots,{P}_m)\equiv 0$. Thus, $W(\tilde{P}_1,\dots,\tilde{P}_{m-1})\equiv 0$.
\end{proof} 

Lemma~\ref{lem:RW} enables us to use an inductive argument towards the proof of Conjecture~\ref{conj:General} as follows. Suppose that Conjecture~\ref{conj:General} holds for any $1< l\leq m-1$, i.e., for any $\{P_1,\dots,P_l\}$ such that $0\leq d_i\leq l-1$ for all $i\in [l]$ and $d_i=l-1$ for some $i\in [l]$, if $W(P_1,\dots,P_l)\equiv 0$, then $\{P_1,\dots,P_l\}$ has GRP. We need to prove that for any $\{P_1,\dots,P_m\}$ such that $0\leq d_i\leq m-1$ for all $i\in [m]$ and $d_i=m-1$ for some $i\in [m]$, if $W(P_1,\dots,P_m)\equiv 0$, then $\{P_1,\dots,P_m\}$ has GRP. The proof follows by contradiction. Assume that $W(P_1,\dots,P_m)\equiv 0$ and $\{P_1,\dots,P_m\}$ does not have GRP. Consider the resulting $\{\tilde{P}_1,\dots,\tilde{P}_m\}$ from $\{P_1,\dots,P_m\}$ for an (acceptable) reduction set such that $\{\tilde{P}_1,\dots,\tilde{P}_{m-1}\}$ has GNRP. By definition, $\tilde{d}_i<m-1$ for all $i\in [m-1]$. Since $W(P_1,\dots,P_m)\equiv 0$ (by assumption), then $W(\tilde{P}_1,\dots,\tilde{P}_{m-1})\equiv 0$ (by Lemma~\ref{lem:RW}), and so, $\{\tilde{P}_1,\dots,\tilde{P}_{m-1}\}$ has GRP (by the induction hypothesis), yielding a contradiction. Our goal is thus to devise an (acceptable) reduction process such that if $\{P_1,\dots,P_m\}$ has GNRP, then so does $\{\tilde{P}_1,\dots,\tilde{P}_{m-1}\}$. The problem of designing such a process is still open in general. In the following, we propose a simple yet powerful reduction process which solves this problem for $m\leq 4$ and $0\leq d_i\leq m-1$ for all $i\in [m]$, and for $m=5$ and $d_i=m-1$ for all $i\in [m]$. 






From now on, we assume that $\{P_i\}$ ($=\{P_i\}_{1\leq i\leq m}$) is a set of $m$ polynomials $P_i$ (with the sets of roots $N_i$) such that $0\leq d_i\leq m-1 $ for all $i\in [m-1]$, and $d_m=m-1$. 

\begin{definition}
A subset $S\subset N$ is an \emph{$(r,s)$-subset} in a subset $\mathcal{Q}$ of $\{P_i\}$ if $S$ belongs to $r$ polynomials in $\mathcal{Q}$ (i.e., there exist $r$ polynomials $P_i$ in $\mathcal{Q}$ such that $S\subset N_i$), and $|S|=s$. Moreover, an $(r,s)$-subset has \emph{higher order} than an $(r^{\star},s^{\star})$-subset if $r+s>r^{\star}+s^{\star}$, or $r+s=r^{\star}+s^{\star}$ and $r>r^{\star}$.
\end{definition}

The following lemma gives the intuition behind the definition of $(r,s)$-subsets. 

\begin{lemma}\label{lem:GRPOrder}
If $\{P_i\}$ has GNRP, then there exists no $(r,s)$-subset in $\{P_i\}$ such that $r+s> m$. 
\end{lemma}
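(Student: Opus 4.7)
The plan is to prove Lemma~\ref{lem:GRPOrder} by contrapositive (equivalently, by contradiction): I will show that if some $(r,s)$-subset in $\{P_i\}$ has $r+s>m$, then $\{P_i\}$ already satisfies the GRP condition, which contradicts the hypothesis of GNRP.

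First, I would unpack the definitions. Suppose for contradiction that there is an $(r,s)$-subset $S\subset N$ in $\{P_i\}$ with $r+s>m$. By definition, $S\subset N_i$ for some $r$ distinct indices $i$, and $|S|=s$; hence $s=|S|\leq |N_i|=d_i\leq m-1$ for each such polynomial. Combined with $r+s>m$, this immediately forces $r\geq 2$, which is the basic feasibility condition for any candidate $k$ in the GRP definition.

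Next I would exhibit explicit parameters $(k,l)$ witnessing GRP. The natural choice is $k=r$ and $l=0$. I check the admissibility constraints: $1<k=r\leq m$ holds, and $0\leq l=0\leq m-k=m-r$ also holds (since $r\leq m$). For the substantive conditions, the degree constraint "degrees at most $m-l-1=m-1$" is automatic from the global assumption $d_i\leq m-1$, so the $r$ polynomials containing $S$ all qualify. The common-roots constraint requires at least $m-k-l+1=m-r+1$ common roots; the set $S$ itself provides $s$ common roots, and $r+s>m$ rearranges to $s\geq m-r+1$, so this is satisfied.

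Thus the triple $(k,l)=(r,0)$ realizes the GRP condition for $\{P_i\}$, contradicting the assumption that $\{P_i\}$ has GNRP. This completes the argument. I do not anticipate a serious obstacle here: once the ``trivial'' choice $l=0$ is made, the degree constraint evaporates and the common-roots bound reduces to exactly the rearrangement of $r+s>m$. The only point requiring a moment of care is ruling out $r=1$ (needed because GRP insists on $k>1$), which as noted above is automatic from $s\leq m-1$.
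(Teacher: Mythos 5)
Your proof is correct and supplies exactly the details the paper chooses to omit (the paper dismisses this lemma with ``straightforward and follows from the definitions''). Picking $k=r$, $l=0$ is the right move: $r\leq m$ is automatic, $s\leq m-1$ forces $r\geq 2$ so $k>1$, with $l=0$ the degree constraint $d_i\leq m-l-1=m-1$ is vacuous, and $r+s>m$ rearranges to $s\geq m-r+1=m-k-l+1$ common roots. Nothing is missing.
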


\begin{proof}
The proof is straightforward and follows from the definitions (and hence omitted).
\end{proof}



Intuitively, for any (acceptable) reduction set $R$, any highest-order $(r,s)$-subset $S$, if not broken (i.e., $S\cap R = \emptyset$), is the most likely to cause rectangularity in $\{\tilde{P}_i\}$ for any $\{P_i\}$ with non-rectangular property. This is the main idea of the proposed reduction process. 

\begin{definition}
An element $\beta$ of a subset $S\subset N$ is \emph{removable} if $\beta$ is a root of some but not all polynomials of degree $m-1$. 
\end{definition}

\begin{definition}
A subset $S\subset N$ is \emph{weakly reducible} if $S$ belongs to a polynomial of degree $m-1$, and $S$ has a removable element. 
\end{definition}

\begin{definition}
A weakly reducible $(r,s)$-subset $S$ is \emph{strongly reducible} if no other weakly reducible $(r^{\star},s^{\star})$-subset has higher order than $S$.
\end{definition}





\subsubsection*{Proposed Reduction Process}
Given $\{P_i\}$, choose an arbitrary strongly reducible subset $S$ in $\{P_i\}$, and choose an arbitrary removable element of $S$, say $\beta$, such that no other removable element of $S$, when compared to $\beta$, belongs to more polynomials of degree $m-1$ in $\{P_i\}$. Break $S$ via removing $\beta$ from the sets $N_i$ of roots of all polynomials $P_i$, and update all polynomials $P_i$ via replacing $N_i$ by $N_i\setminus\{\beta\}$. Repeat this process (in rounds) over the resulting $\{P_i\}$ if there exist more than one polynomial of degree $m-1$. Otherwise, terminate the process, and return the resulting $\{P_i\}$ denoted by $\{\tilde{P}_i\}$. 

Note that if $\{P_i\}$ has GNRP initially, then (i) in each round of the process, such $\beta$ exists, and (ii) the process terminates eventually. Otherwise, there must exist two (or more) identical polynomials of degree $m-1$ in $\{P_i\}$ (and hence $\{P_i\}$ has GRP), which is a contradiction. 




Consider an arbitrary run of the reduction process over $\{P_i\}$ and its corresponding $\{\tilde{P}_i\}$. Let $R$ be the set of the roots that the reduction process removes over the rounds. (Note that, due to the arbitrary choices in the reduction process, $R$ may or may not be unique.) Hereafter, for any such $R$, assume, w.l.o.g., that the (initial) indexing of polynomials in $\{P_i\}$ is such that $\tilde{d}_1\leq \dots\leq \tilde{d}_{m-1}<\tilde{d}_m$ ($=d_m$) and $\tilde{P}_m=P_m$, and denote $\{P_i\}_{1\leq i\leq m-1}$ by $\{P_i\}_{\star}$. 



The proofs of our main theorems rely on the following properties of the proposed reduction process. 

%



\begin{lemma}\label{lem:MaxHD}
If $\{P_i\}$ has GNRP, then any $(r,s)$-subset in $\{P_i\}_{\star}$ such that $r+s=m$ belongs to a polynomial $P_i$ in $\{P_i\}_{\star}$ such that $d_i=m-1$.
\end{lemma}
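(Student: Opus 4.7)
The plan is to prove the statement by cases on $r$, with the substantive case following directly from the GNRP hypothesis on $\{P_i\}$.

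For the trivial case $r = 1$, we have $s = m - 1$, and the unique polynomial $P_{i_1}$ in $\{P_i\}_{\star}$ containing $S$ must satisfy $d_{i_1} = |N_{i_1}| \ge |S| = m - 1$; combined with the standing upper bound $d_{i_1} \le m - 1$, this immediately forces $d_{i_1} = m - 1$.

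For $r \ge 2$, I would argue by contradiction. Assume every one of the $r$ polynomials $P_{i_1}, \dots, P_{i_r}$ in $\{P_i\}_{\star}$ containing $S$ satisfies $d_{i_j} \le m - 2$. The key move is to read this configuration as a GRP witness for the full set $\{P_i\}$ with parameters $(k,l) = (r,1)$. One verifies that $2 \le k = r \le m-1$ (the upper bound holds because $\{P_i\}_{\star}$ contains only $m-1$ polynomials, so $r$ cannot exceed $m-1$) and that $l = 1 \le m - k$; that the $k$ polynomials $P_{i_1}, \dots, P_{i_r}$ have degrees at most $m - 2 = m - l - 1$; and that they share $|S| = s = m - r = m - k - l + 1$ common roots. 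This is exactly a GRP instance for $\{P_i\}$, contradicting the hypothesis that $\{P_i\}$ has GNRP.

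I expect the only subtle step to be spotting the correct GRP parameters $(k, l) = (r, 1)$; once they are in hand, the argument is a direct translation between the compact $(r, s)$-subset language and the GNRP/GRP definition. In particular, the proof does not require the reduction process, the re-indexing convention used to define $\{P_i\}_{\star}$, or the special status of $P_m$: it is purely a structural consequence of the global GNRP hypothesis applied to those polynomials in $\{P_i\}_{\star}$ that happen to contain $S$.
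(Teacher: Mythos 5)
Your proof is correct and follows essentially the same idea as the paper: if $S$ belonged only to polynomials of degree less than $m-1$, one would obtain a GRP witness for $\{P_i\}$, contradicting GNRP. The paper packages this by collecting the low-degree polynomials into a set $\mathcal{Q}$ and invoking Lemma~\ref{lem:GRPOrder} on $\mathcal{Q}$ (an informal extension to subsets with the ambient parameter shifted to $m-1$), whereas you unfold the same observation directly by exhibiting the GRP parameters $(k,l)=(r,1)$ — a slightly more explicit and arguably cleaner rendering of the identical argument, with the $r=1$ case handled separately as a triviality.
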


\begin{proof}
Let $S$ be an arbitrary $(r,s)$-subset in $\{P_i\}_{\star}$ such that $r+s=m$. Let $\mathcal{Q}$ be the set of all polynomials $P_i$ in $\{P_i\}_{\star}$ such that $d_i< m-1$. Note that $\mathcal{Q}$ is a set of at most $m-1$ polynomials of degrees at most $m-2$. Since $\{P_i\}$ has GNRP (by assumption), then $\{P_i\}_{\star}$ (and hence $\mathcal{Q}$) has GNRP. Thus, there exists no $(r^{\star},s^{\star})$-subset in $\mathcal{Q}$ such that $r^{\star}+s^{\star}>m-1$ (by Lemma~\ref{lem:GRPOrder}). Suppose that $S$ belongs to no polynomial $P_i$ in $\{P_i\}_{\star}\setminus \mathcal{Q}$. Then, $S$ is an $(r,s)$-subset in $\mathcal{Q}$ such that $r+s=m$. This is, however, a contradiction. Thus, $S$ belongs to a polynomial $P_i$ in $\{P_i\}_{\star}\setminus \mathcal{Q}$. 
\end{proof}

\begin{lemma}\label{lem:ProperCol}
If $\{P_i\}$ has GNRP, then any $(r,s)$-subset in $\{P_i\}_{\star}$ such that $r+s=m$ is weakly reducible. 
\end{lemma}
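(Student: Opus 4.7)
The plan is to argue by contradiction, combining Lemma~\ref{lem:MaxHD} with Lemma~\ref{lem:GRPOrder}. Suppose that $S$ is an $(r,s)$-subset in $\{P_i\}_{\star}$ with $r+s=m$ that is \emph{not} weakly reducible. Weak reducibility requires two ingredients --- that $S$ be contained in some polynomial of degree $m-1$, and that $S$ possess a removable element --- so I would dispose of the two ingredients in sequence and see which one fails.

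First, by Lemma~\ref{lem:MaxHD}, $S$ is contained in some $P_j \in \{P_i\}_{\star}$ with $d_j = m-1$. This already satisfies the first ingredient, so the only remaining way for $S$ to fail weak reducibility is that $S$ has no removable element. Unpacking the definition, this would mean that each $\beta \in S$ is a root of either \emph{all} or \emph{none} of the degree-$(m-1)$ polynomials in $\{P_i\}$. Since $\beta \in S \subseteq N_j$ and $d_j = m-1$, the "none" case is excluded. Hence every $\beta \in S$ is forced to be a root of every degree-$(m-1)$ polynomial in $\{P_i\}$; in particular $S \subseteq N_m$, because $d_m = m-1$ by the standing indexing convention.

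To finish, I would simply count. By hypothesis, $r$ polynomials of $\{P_i\}_{\star}$ contain $S$, and I have just shown that $P_m$ --- which lies outside $\{P_i\}_{\star}$ --- also contains $S$. Therefore $S$ is an $(r',s)$-subset in $\{P_i\}$ with $r' \geq r+1$, giving $r'+s \geq m+1 > m$. This contradicts Lemma~\ref{lem:GRPOrder}, since $\{P_i\}$ was assumed to have GNRP.

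I do not anticipate any real obstacle: the argument is a short bookkeeping reduction that turns the equality $r+s = m$ inside the sub-collection $\{P_i\}_{\star}$ into the strict inequality $r'+s > m$ inside the full collection $\{P_i\}$, simply by accounting for $P_m$. The only subtlety worth double-checking is that "no removable element" is the correct negation of the definition --- i.e., that the other clause of weak reducibility (containment in a degree-$(m-1)$ polynomial) is automatically handed to us by Lemma~\ref{lem:MaxHD}, which it is.
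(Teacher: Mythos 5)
Your proof is correct and follows essentially the same route as the paper's: invoke Lemma~\ref{lem:MaxHD} to place $S$ in a degree-$(m-1)$ polynomial, observe that the failure of removability forces $S\subseteq N_m$, and then contradict Lemma~\ref{lem:GRPOrder} via the resulting $(r+1,s)$-subset in $\{P_i\}$. The only cosmetic difference is that the paper argues directly by a dichotomy on whether some $\beta\in S$ lies outside $N_m$, whereas you phrase it as a contradiction; the content is identical.
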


\begin{proof}
Let $S$ be an arbitrary $(r,s)$-subset in $\{P_i\}_{\star}$ such that $r+s=m$. Note that $S$ belongs to a polynomial in $\{P_i\}_{\star}$ (and so $\{P_i\}$) of degree $m-1$ (by Lemma~\ref{lem:MaxHD}). Note, also, that $P_m$ has degree $m-1$. Thus, if there exists $\beta\in S$ such that $\beta\not\in N_m$, then $S$ is weakly reducible since $\beta$ is removable (by definition). Otherwise, if $S\subseteq N_m$, then $S$ is an $(r+1,s)$-subset in $\{P_i\}$. Since $r+1+s=m+1>m$, then $\{P_i\}$ has GRP (by Lemma~\ref{lem:GRPOrder}), yielding a contradiction.\end{proof}

\begin{lemma}\label{lem:DisjointMax}
If $\{P_i\}$ has GNRP, then the strongly reducible $(r,s)$-subsets in $\{P_i\}_{\star}$ such that $r+s=m$ belong to disjoint subsets of $\{P_i\}_{\star}$. 	
\end{lemma}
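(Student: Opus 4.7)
The plan is to argue by contradiction. I would suppose that there exist two distinct strongly reducible subsets $S_1,S_2$ in $\{P_i\}_{\star}$, both satisfying $r_i+s_i=m$ for $i=1,2$, whose associated polynomial sets $\mathcal{Q}_1,\mathcal{Q}_2$ (the sets of polynomials in $\{P_i\}_{\star}$ containing $S_1$ and $S_2$ respectively) satisfy $\mathcal{Q}_1\cap\mathcal{Q}_2\neq\emptyset$. Because both $S_1$ and $S_2$ are strongly reducible, neither can have strictly higher order than the other; combined with $r_i+s_i=m$ and the tie-break on the first coordinate, this forces $(r_1,s_1)=(r_2,s_2)=:(r,s)$, and in particular $|\mathcal{Q}_1|=|\mathcal{Q}_2|=r$ and $|S_1|=|S_2|=s$. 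Set $q\triangleq|\mathcal{Q}_1\cap\mathcal{Q}_2|\geq 1$.

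Next I would extract three size bounds. The ambient constraint $|\mathcal{Q}_1\cup\mathcal{Q}_2|=2r-q\leq|\{P_i\}_{\star}|=m-1$ forces $q\geq r-s+1$. Applying Lemma~\ref{lem:GRPOrder} to $S_1\cup S_2$, which lies in the $q$ polynomials of $\mathcal{Q}_1\cap\mathcal{Q}_2$, yields $q+(2s-|S_1\cap S_2|)\leq m$, so $|S_1\cap S_2|\geq s+q-r\geq 1$. Applying Lemma~\ref{lem:GRPOrder} to $S_1\cap S_2$, which lies in the $2r-q$ polynomials of $\mathcal{Q}_1\cup\mathcal{Q}_2$, yields $|S_1\cap S_2|\leq s+q-r$. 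Together these give $|S_1\cap S_2|=s+q-r\geq 1$, and $S_1\cap S_2$ is a non-empty $(2r-q,s+q-r)$-subset in $\{P_i\}_{\star}$ whose coordinates sum to exactly $m$.

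To close, I would rule out $q=r$: otherwise $\mathcal{Q}_1=\mathcal{Q}_2$, and Lemma~\ref{lem:GRPOrder} applied to $S_1\cup S_2$ in these $r$ polynomials would force $|S_1\cup S_2|\leq s$, hence $S_1=S_2$, contradicting distinctness. Therefore $q<r$ and $2r-q>r$, so the $(2r-q,s+q-r)$-subset $S_1\cap S_2$ has the same order-sum $m$ as $(r,s)$ but strictly larger first coordinate, i.e., strictly higher order than $S_1$. Lemma~\ref{lem:ProperCol} then guarantees that $S_1\cap S_2$ is weakly reducible, contradicting the strong reducibility of $S_1$.

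I expect the main obstacle to be the bookkeeping among the three bounds: it is precisely the ambient constraint $|\{P_i\}_{\star}|=m-1$ that forces $q\geq r-s+1$ and therefore $|S_1\cap S_2|\geq 1$. Without this step the intersection could be empty, in which case no strictly-higher-order weakly reducible subset is produced from $S_1$ and $S_2$, and the argument breaks down. Once the non-emptiness of $S_1\cap S_2$ is secured, Lemma~\ref{lem:ProperCol} delivers the contradiction immediately.
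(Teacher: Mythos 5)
Your proof is correct and rests on the same core tools as the paper's (apply Lemma~\ref{lem:GRPOrder} to both $S_1\cup S_2$ and $S_1\cap S_2$, then invoke Lemma~\ref{lem:ProperCol} on $S_1\cap S_2$ to produce a weakly reducible subset of strictly higher order, contradicting strong reducibility), but the bookkeeping is organized differently and is arguably tighter. The paper splits the case $0<|\mathcal{Q}_1\cap\mathcal{Q}_2|<r$ into two sub-cases depending on whether $|S_1\cap S_2|$ is smaller or at least $m-2r+|\mathcal{Q}_1\cap\mathcal{Q}_2|$, handling the first via $S_1\cup S_2$ and the second via $S_1\cap S_2$; it does not explicitly rule out $S_1\cap S_2=\emptyset$ in the borderline $r^{\star}+s^{\star}=m$ sub-case (that case is in fact impossible, but only because $2r-q=|\mathcal{Q}_1\cup\mathcal{Q}_2|\leq m-1$, which the paper leaves implicit). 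You instead invoke the ambient bound $|\mathcal{Q}_1\cup\mathcal{Q}_2|\leq|\{P_i\}_{\star}|=m-1$ up front, which pins down $|S_1\cap S_2|=s+q-r\geq 1$ exactly, eliminates the sub-case split, and makes the non-emptiness needed for Lemma~\ref{lem:ProperCol} explicit. You also justify at the outset that strong reducibility forces $(r_1,s_1)=(r_2,s_2)$, which the paper takes for granted. Net effect: same ideas, a cleaner and slightly more self-contained derivation.
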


\begin{proof}
Let $S_1$ and $S_2$ be two arbitrary strongly reducible $(r,s)$-subsets in $\{P_i\}_{\star}$ such that $r+s=m$. Let $\mathcal{Q}_1$ or $\mathcal{Q}_2$ be the set of $r$ polynomials $P_i$ in $\{P_i\}_{\star}$ such that $S_1$ or $S_2$ belongs to $P_i$, respectively. Note that $0\leq |\mathcal{Q}_1\cap\mathcal{Q}_2|\leq r$. First, suppose that $|\mathcal{Q}_1\cap\mathcal{Q}_2|=r$. Then, $S=S_1\cup S_2$ is an $(r,|S_1\cup S_2|)$-subset in $\{P_i\}_{\star}$. Since $r+|S_1\cup S_2|>r+s=m$, then $\{P_i\}$ has GRP (by Lemma~\ref{lem:GRPOrder}), and hence a contradiction. Next, suppose that $0<|\mathcal{Q}_1\cap \mathcal{Q}_2|<r$. We consider two cases. First, suppose that $|S_1\cap S_2|<m-2r+|\mathcal{Q}_1\cap\mathcal{Q}_2|$. Then, $S=S_1\cup S_2$ is a $(|\mathcal{Q}_1\cap\mathcal{Q}_2|,2s-|S_1\cap S_2|)$-subset in $\{P_i\}_{\star}$. Let $r^{\star}=|\mathcal{Q}_1\cap\mathcal{Q}_2|$ and $s^{\star}=2s-|S_1\cap S_2|$. Since $r^{\star}+s^{\star}=|\mathcal{Q}_1\cap\mathcal{Q}_2|+m-2r-|S_1\cap S_2|>m$, then $\{P_i\}$ has GRP (by Lemma~\ref{lem:GRPOrder}), which is a contradiction. Next, suppose that $|S_1\cap S_2|\geq m-2r+|\mathcal{Q}_1\cap \mathcal{Q}_2|$. Then, $S=S_1\cap S_2$ is a $(2r-|\mathcal{Q}_1\cap\mathcal{Q}_2|,|S_1\cap S_2|)$-subset in $\{P_i\}_{\star}$. Let $r^{\star}=2r-|\mathcal{Q}_1\cap\mathcal{Q}_2|$ and $s^{\star}=|S_1\cap S_2|$. Note that $r^{\star}+s^{\star}=2r-|\mathcal{Q}_1\cap\mathcal{Q}_2|+|S_1\cap S_2|\geq m$. If $r^{\star}+s^{\star}>m$, then $\{P_i\}$ has GRP (by Lemma~\ref{lem:GRPOrder}), and hence a contradiction. If $r^{\star}+s^{\star}=m$, then $S$ is weakly reducible (by Lemma~\ref{lem:ProperCol}), and $S$ has higher order than $S_1$ and $S_2$ since $r^{\star}+s^{\star}=r+s$ and $r^{\star}>r$. This is also a contradiction since $S_1$ and $S_2$ are strongly reducible (by assumption). Thus, $|\mathcal{Q}_1\cap \mathcal{Q}_2|=0$. 
\end{proof}

\begin{lemma}\label{lem:MaxCol}
For $m=2,3,4$ and $0\leq d_i\leq m-1$ for all $i\in [m]$, and for $m=5$ and $d_i=m-1$ for all $i\in [m]$, if $\{P_i\}$ has GNRP, then the reduction process breaks any strongly reducible $(r,s)$-subset in $\{P_i\}_{\star}$ such that $r+s=m$.
\end{lemma}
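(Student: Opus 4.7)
The plan is to prove Lemma \ref{lem:MaxCol} by contradiction, with case analysis on $m$ and the pair $(r, s)$. Fix an acceptable run of the reduction process with reduction set $R$ and resulting labeling $\{P_i\}_{\star}$; let $S$ be a strongly reducible $(r, s)$-subset in $\{P_i\}_{\star}$ with $r + s = m$, and assume for contradiction $S \cap R = \emptyset$. Let $\mathcal{Q}_S \subseteq \{P_i\}_{\star}$ be the $r$ polynomials containing $S$; by Lemma \ref{lem:MaxHD}, some $P_{j_0} \in \mathcal{Q}_S$ has $d_{j_0} = m - 1$. Since $r + s = m$, Lemma \ref{lem:GRPOrder} forbids $S \subseteq N_m$, so $\mathcal{Q}_S$ is also the full set of polynomials of $\{P_i\}$ containing $S$.

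The case $r = 1$ is immediate: $S = N_{j_0}$, and acceptability of $R$ forces $\tilde{d}_{j_0} < d_{j_0}$, so some element of $N_{j_0} = S$ lies in $R$, contradicting $S \cap R = \emptyset$. For $r \geq 2$, I apply Lemma \ref{lem:DisjointMax} in $\{P_i\}$: any competing strongly reducible $(r^{\star}, s^{\star})$-subset with $r^{\star} + s^{\star} = m$ must sit in polynomials disjoint from $\mathcal{Q}_S$, hence with $r^{\star} \leq m - 1 - r$. A direct check across the cases in the lemma shows that, combined with the $(r, s)$-order tie-breaking (higher $r$ at equal $r + s$), $S$ is the unique strongly reducible top-order subset of $\{P_i\}$ in every scenario except $m = 5$ with $(r, s) = (2, 3)$. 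In each such unique case the process must pick $S$ in round $1$, removing an element of $S$ and yielding the contradiction.

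For the remaining case $m = 5$, $(r, s) = (2, 3)$, a companion strongly reducible $(2, 3)$-subset $S'$, with $\mathcal{Q}_{S'}$ disjoint from $\mathcal{Q}_S$ and $P_m$ the fifth polynomial, may coexist. In round $1$ the process picks $S$ or $S'$; picking $S$ ends the argument. If it picks $S'$, some $\beta' \in S'$ is removed, dropping both polynomials of $\mathcal{Q}_{S'}$ to degree $m - 2$. I would verify that $\beta' \notin N_j \cup N_{j'}$ for $P_j, P_{j'} \in \mathcal{Q}_S$ (otherwise $\{\beta'\}$ would be a singleton in at least four polynomials, producing a $(4, 1)$-subset of strictly higher order than $S$, contradicting the strong reducibility of $S$), so the polynomials of $\mathcal{Q}_S$ retain degree $m - 1$ and $S$ remains weakly reducible. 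A parallel counting argument, using Lemma \ref{lem:GRPOrder} to exclude any new $(3, 2)$- or $(4, 1)$-subset in the new state, shows that $S$ becomes the unique strongly reducible top-order subset in round $2$. The process must then pick $S$, and some element of $S$ is removed.

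The main obstacle will be verifying the uniqueness claims carefully, in particular across the two rounds of the $(2, 3)$ sub-case at $m = 5$, and confirming that the process's tie-breaking rule (selecting $\beta$ with maximum membership in degree-$(m-1)$ polynomials) is compatible with acceptability of $R$ in these configurations. The argument rests on the combinatorial rigidity imposed jointly by GNRP and the strong reducibility of $S$, which together preclude any element of $N$ from appearing in four or more polynomials of $\{P_i\}$.
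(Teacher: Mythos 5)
Your overall route is the same as the paper's: Lemmas~\ref{lem:MaxHD}, \ref{lem:ProperCol}, \ref{lem:GRPOrder} and~\ref{lem:DisjointMax} force $S$ to be the unique top-order strongly reducible subset in every configuration except $m=5$ with $(r,s)=(2,3)$, where exactly one competitor $S'$ (the paper's $T$) on a disjoint pair of polynomials can coexist, and one then argues that breaking $S'$ cannot substitute for breaking $S$. The only substantive difference is how you close that last sub-case: you run a round-by-round analysis, whereas the paper argues globally that both polynomials of $\mathcal{Q}_S$ must be reduced and that breaking $T$ cannot reduce both.

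In that sub-case there is a wrong step. You claim $\beta'\notin N_j\cup N_{j'}$ for the two polynomials $P_j,P_{j'}\in\mathcal{Q}_S$, on the grounds that otherwise $\{\beta'\}$ would lie in at least four polynomials. The count is incorrect: if $\beta'$ is a root of exactly one of $P_j,P_{j'}$, then $\{\beta'\}$ lies in only three polynomials (the two of $\mathcal{Q}_{S'}$ plus one of $\mathcal{Q}_S$), i.e., it is a $(3,1)$-subset with $r+s=4<5$, which is \emph{not} of higher order than the $(2,3)$-subset $S$ and contradicts nothing. Strong reducibility of $S$ only rules out $\beta'$ being a \emph{common} root of both $P_j$ and $P_{j'}$ (which would produce a $(4,1)$-subset of order $(4,1)$, higher than $(2,3)$); this weaker fact is exactly what the paper uses (``no element of $T$ is a common root of both $P_1$ and $P_2$''). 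Consequently your assertion that both polynomials of $\mathcal{Q}_S$ retain degree $m-1$ after round~1 can fail: one of them may drop to degree $m-2$. The argument is repairable, since $S$ still belongs to at least one degree-$(m-1)$ polynomial and hence remains weakly reducible, and the other polynomial of $\mathcal{Q}_S$ still has to be reduced somehow --- which is precisely how the paper concludes. But as written the step is false, and the ``parallel counting argument'' establishing that $S$ is the unique top-order strongly reducible subset of the round-2 configuration (in particular, excluding a weakly reducible $(2,3)$- or $(3,2)$-subset involving $P_5$ that could now attain top order) is asserted rather than carried out; that verification is the actual content of the sub-case and must be supplied.
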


\begin{proof}
Let $S$ be an arbitrary strongly reducible $(r,s)$-subset in $\{P_i\}_{\star}$ such that $r+s=m$. Since $\{P_i\}$ has GNRP, $S$ belongs to a polynomial $P_i$ in $\{P_i\}_{\star}$ of degree $m-1$ (by Lemma~\ref{lem:MaxHD}) and no other strongly reducible $(r,s)$-subset in $\{P_i\}_{\star}$ belongs to $P_i$ (by Lemma~\ref{lem:DisjointMax}). Moreover, for any $m=2,3,4$ and any $0\leq d_1\leq \dots\leq d_m=m-1$, there exists no other $(r,s)$-subset in $\{P_i\}_{\star}$ such that $r+s=m$ (otherwise, $\{P_i\}$ has GRP). Thus $S$ must be broken to reduce $P_i$. 

For $m=5$ and $d_1=\dots=d_5=4$, $S$ is either a $(4,1)$- or $(3,2)$- or $(2,3)$-subset in $\{P_i\}_{\star}$. First, suppose that $S$ is a $(4,1)$- or $(3,2)$-subset in $\{P_i\}_{\star}$. Since there exists no other $(4,1)$- or $(3,2)$-subset in $\{P_i\}_{\star}$ (otherwise, $\{P_i\}$ has GRP), then $S$ must be broken to reduce $P_i$. Next, suppose that $S$ is a $(2,3)$-subset in $\{P_i\}_{\star}$. Let $\mathcal{Q}$ be the set of two polynomials in $\{P_i\}_{\star}$, say $P_1$ and $P_2$, such that $S$ belongs to both $P_1$ and $P_2$. Let $T$ be an arbitrary (if any) strongly reducible $(2,3)$-subset in $\{P_i\}_{\star}\setminus \mathcal{Q}$. If $T$ does not exist, then $S$ must be broken to reduce both $P_1$ and $P_2$. If $T$ exists, no element of $T$ is a common root of both $P_1$ and $P_2$ (otherwise, there exists a strongly reducible $(4,1)$-subset in $\{P_i\}_{\star}$, which is a contradiction since $S$ is a strongly reducible $(2,3)$-subset). Since $\{P_i\}$ has GNRP, then there exists no other strongly reducible $(2,3)$-subset in $\{P_i\}_{\star}$ (by Lemma~\ref{lem:DisjointMax}), and breaking $T$ cannot reduce both $P_1$ and $P_2$ simultaneously. Thus, $S$ must be broken to reduce $P_1$ or $P_2$ (or both).
\end{proof}

\section{Proofs of Main Theorems}
In this section, we prove our main theorems. For simplicity, we denote the degree-set of polynomials $P_1,\dots,P_m$ and $\tilde{P}_1,\dots,\tilde{P}_m$ by $(d_1,\dots,d_m)$ and $(\tilde{d}_1,\dots,\tilde{d}_m)$, respectively. 
%
%
%

\begin{proof}[Proof of Theorem~\ref{thm:FirstSp}]
Assume that $W(P_1,P_2)\equiv 0$. If $(d_1,d_2)=(0,1)$, then $W(P_1,P_2)=1\not\equiv 0$, which is a contradiction. If $(d_1,d_2)=(1,1)$, then $W(P_1,P_2)=P_1-P_2$. Thus, $P_1=P_2$, i.e., $\{P_1,P_2\}$ has RP (and hence GRP).   	
\end{proof}

\begin{proof}[Proof of Theorem~\ref{thm:SecondSp}]
The proof follows by contradiction. Assume that $W(P_1,P_2,P_3)\equiv 0$, and $\{P_1,P_2,P_3\}$ has GNRP. If $(d_1,d_2,d_3)=(2,2,2)$, then $(\tilde{d}_1,\tilde{d}_2,\tilde{d}_3)=(1,1,2)$ (since the reduction process either reduces both $P_1$ and $P_2$ simultaneously, or it first reduces one, and then reduces the other one). Since $W(P_1,P_2,P_3)\equiv 0$ (by assumption), then $W(\tilde{P}_1,\tilde{P}_2)\equiv 0$ (by Lemma~\ref{lem:RW}). Thus, $\{\tilde{P}_1,\tilde{P}_2\}$ has GRP (by Theorem~\ref{thm:FirstSp}), i.e., there exists a $(2,1)$-subset $S$ in $\{\tilde{P}_1,\tilde{P}_2\}$. Thus, $S$ is a strongly reducible $(r,s)$-subset in $\{P_1,P_2\}$ such that $r+s=m=3$ (by Lemmas~\ref{lem:ProperCol} and~\ref{lem:GRPOrder}), and it must have been broken (by Lemma~\ref{lem:MaxCol}), yielding a contradiction.

If $(d_1,d_2,d_3)=(1,2,2)$, then $(\tilde{d}_1,\tilde{d}_2,\tilde{d}_3)\in \{(1,1,2),(0,1,2)\}$ (since the reduction process must reduce $P_2$, and reducing $P_2$ may or may not reduce $P_1$). If $(\tilde{d}_1,\tilde{d}_2,\tilde{d}_3)=(1,1,2)$, then $\{\tilde{P}_1,\tilde{P_2}\}$ has GRP, yielding a contradiction as before. If $(\tilde{d}_1,\tilde{d}_2,\tilde{d}_3)=(0,1,2)$, then $W(\tilde{P}_1,\tilde{P}_2)=1\not\equiv 0$. Thus, $W(P_1,P_2,P_3)\not\equiv 0$ (by Lemma~\ref{lem:RW}), which is again a contradiction. 

If $(d_1,d_2,d_3)=(0,2,2)$, then $(\tilde{d}_1,\tilde{d}_2,\tilde{d}_3)=(0,1,2)$ (since the reduction process must reduce $P_2$, and reducing $P_2$ does not reduce $P_1$). Since $W(\tilde{P}_1,\tilde{P}_2)=1\not\equiv 0$, then $W(P_1,P_2,P_3)\not\equiv 0$ (by Lemma~\ref{lem:RW}), yielding a contradiction. If $(d_1,d_2,d_3)=(1,1,2)$, then $(\tilde{d}_1,\tilde{d}_2,\tilde{d}_3)=(1,1,2)$ (since the reduction process does not reduce $P_1$ and $P_2$). Thus, $\{\tilde{P}_1,\tilde{P}_2\}$ ($=\{P_1,P_2\}$) has GRP (by the same argument as before), which is a contradiction. If $(d_1,d_2,d_3)=(0,1,2)$, then $W(P_1,P_2,P_3)=1\not\equiv 0$, yielding a contradiction. If $(d_1,d_2,d_3)=(0,0,2)$, then $P_1=P_2=1$. Thus, $\{P_1,P_2\}$ has GRP, again a contradiction.  
\end{proof}


\begin{proof}[Proof of Theorem~\ref{thm:ThirdSp}]
Due to the lack of space, we only give the proofs for the cases of $(d_1,\dots,d_4)\in\{(3,3,3,3),$ $(2,3,3,3),$ $(1,3,3,3),$ $(2,2,3,3)\}$. (The proofs of the rest of the cases follow the exact same lines.) The proof is by way of contradiction. Assume that $W(P_1,\dots,P_4)\equiv 0$, and $\{P_1,\dots,P_4\}$ has GNRP. Since $W(\tilde{P}_1,\tilde{P}_2,\tilde{P}_3)\equiv 0$ (by Lemma~\ref{lem:RW}), then $\{\tilde{P}_1,\tilde{P}_2,\tilde{P}_3\}$ has GRP (by Theorem~\ref{thm:SecondSp}). 

First, consider $(d_1,\dots,d_4)=(3,3,3,3)$. By the procedure of the reduction process, $(\tilde{d}_1,\dots,\tilde{d}_4)\in\{(2,2,2,3),(1,2,2,3)\}$. Since $\{\tilde{P}_1,\tilde{P}_2,\tilde{P}_3\}$ has GRP, either there exists a $(3,1)$-subset $S_1$, or if $S_1$ does not exist, there exists a $(2,2)$-subset $S_2$, in $\{\tilde{P}_1,\tilde{P}_2,\tilde{P}_3\}$. Since $S_1$ (or $S_2$) is a strongly reducible $(r,s)$-subset in $\{P_1,P_2,P_3\}$ such that $r+s=m=4$ (by Lemmas~\ref{lem:ProperCol} and~\ref{lem:GRPOrder}), $S_1$ (or $S_2$) must have been broken (by Lemma~\ref{lem:MaxCol}), which is a contradiction.


Second, consider $(d_1,\dots,d_4)=(2,3,3,3)$. Then, $(\tilde{d}_1,\dots,\tilde{d}_4)\in\{(2,2,2,3),(1,2,2,3),(0,2,2,3)\}$. For any of these cases, by the same arguments as for the previous case, we arrive at a contradiction. 

Next, consider $(d_1,\dots,d_4)=(1,3,3,3)$. Then, $(\tilde{d}_1,\dots,\tilde{d}_4)\in\{(1,2,2,3),(0,2,2,3),(0,1,2,3)\}$. For the cases of $(\tilde{d}_1,\dots,\tilde{d}_4)\in\{(1,2,2,3),(0,2,2,3)\}$, similar to the previous cases, we reach a contradiction. For the case of $(\tilde{d}_1,\dots,\tilde{d}_4)=(0,1,2,3)$, it follows that $W(\tilde{P}_1,\tilde{P}_2,\tilde{P}_3)=1\not\equiv 0$, which is again a contradiction. 

Lastly, consider $(d_1,\dots,d_4)=(2,2,3,3)$. Then, $(\tilde{d}_1,\dots,\tilde{d}_{4})\in\{(2,2,2,3),(1,2,2,3),(1,1,2,3)\}$. For the cases of $(\tilde{d}_1,\dots,\tilde{d}_{4})\in\{(2,2,2,3),(1,2,2,3)\}$, following the exact same lines as above yields a contradiction. Now, consider the case of $(\tilde{d}_1,\dots,\tilde{d}_{4})=(1,1,2,3)$. Since $\tilde{d}_1=d_1-1$, $\tilde{d}_2=d_2-1$, and $\tilde{d}_3=d_3-1$, then reducing $P_3$ must have reduced $P_1$ and $P_2$ simultaneously. Thus, there exists a $(3,1)$-subset $\{\beta_1\}$ in $\{P_1,P_2,P_3\}$. Since $\{\tilde{P}_1,\tilde{P}_2,\tilde{P}_3\}$ has GRP, there also exists a $(2,1)$-subset $\{\beta_2\}$ in $\{\tilde{P}_1,\tilde{P}_2\}$. Thus, $\{P_1,P_2\}$ has GRP since $\{\beta_1,\beta_2\}$ is a $(2,2)$-subset in $\{P_1,P_2\}$, yielding a contradiction.
\end{proof} 



\begin{proof}[Proof of Theorem~\ref{thm:FourthSp}]
The proof follows by contradiction. Assume that $W(P_1,\dots,P_5)\equiv 0$, and $\{P_1,\dots,P_5\}$ has GNRP. Since $W(\tilde{P}_1,\dots,\tilde{P}_4)\equiv 0$ (by Lemma~\ref{lem:RW}), then $\{\tilde{P}_1,\dots,\tilde{P}_4\}$ has GRP (by Theorem~\ref{thm:ThirdSp}). By the procedure of the reduction process, $(\tilde{d}_1,\dots,\tilde{d}_5)$ $\in\{(3,3,3,3,4),$ $(2,3,3,3,4),$ $(1,3,3,3,4),$ $(2,2,3,3,4)\}$. Consider any of the cases of $(\tilde{d}_1,\dots,\tilde{d}_5)\in$ $\{(3,3,3,3,4),$ $(2,3,3,3,4),$ $(1,3,3,3,4)\}$. Since $\{\tilde{P}_1,\dots,\tilde{P}_4\}$ has GRP, there exists either a $(4,1)$-subset $S_1$, or a $(3,2)$-subset $S_2$ (if $S_1$ does not exist), or a $(2,3)$-subset $S_3$ (if neither $S_1$ nor $S_2$ exists), in $\{\tilde{P}_1,\dots,\tilde{P}_4\}$. Since $S_1$ (or $S_2$ or $S_3$) is a strongly reducible $(r,s)$-subset in $\{P_1,\dots,P_4\}$ such that $r+s=m=5$ (by Lemmas~\ref{lem:ProperCol} and~\ref{lem:GRPOrder}), it must have been broken by the reduction process (by Lemma~\ref{lem:MaxCol}), which is a contradiction. 

Now, consider the case of $(\tilde{d}_1,\dots,\tilde{d}_5)=(2,2,3,3,4)$. Since $\{\tilde{P}_1,\dots,\tilde{P}_4\}$ has GRP, there exists either a $(4,1)$-subset $S_1$, or a $(3,2)$-subset $S_2$, or a $(2,3)$-subset $S_3$, in $\{\tilde{P}_1,\dots,\tilde{P}_4\}$, or if neither of $S_1$, $S_2$, and $S_3$ exists, there exists a $(2,2)$-subset $S_4$ in $\{\tilde{P}_1,\tilde{P}_2\}$. If $S_1$ (or $S_2$ or $S_3$) exists, then it is a strongly reducible $(r,s)$-subset in $\{P_1,\dots,P_4\}$ such that $r+s=m=5$ (by Lemmas~\ref{lem:ProperCol} and~\ref{lem:GRPOrder}), and it must have been broken (by Lemma~\ref{lem:MaxCol}), yielding a contradiction. If neither of $S_1$, $S_2$, and $S_3$ exists, but $S_4$ exists, then there exists a $(2,2)$-subset $\{\beta_1,\beta_2\}$ in $\{\tilde{P}_1,\tilde{P}_2\}$. Since $\tilde{d}_1=d_1-2$, $\tilde{d}_2=d_2-2$, $\tilde{d}_3=d_3-1$, and $\tilde{d}_4=d_4-1$, either (i) $P_3$ and $P_4$ are reduced separately, and reducing $P_3$ and reducing $P_4$ both have reduced $P_1$ and $P_2$ simultaneously, or (ii) $P_1$ and $P_2$ are reduced simultaneously (without reducing $P_3$ or $P_4$), and reducing $P_3$ has reduced $P_1$ (or $P_2$) but not $P_4$, and reducing $P_4$ has reduced $P_2$ (or $P_1$) but not $P_3$. 

First, consider the case (i). Since reducing $P_3$ has reduced both $P_1$ and $P_2$, there exists a $(3,1)$-subset $\{\beta_3\}$ in $\{P_1,P_2,P_3\}$ such that $\beta_3\neq \beta_1,\beta_2$. Similarly, there exists a $(3,1)$-subset $\{\beta_4\}$ in $\{P_1,P_2,P_4\}$ such that $\beta_4\neq \beta_1,\beta_2$. Note that $\beta_3\neq \beta_4$ since otherwise, $\{\beta_3\}$ or $\{\beta_4\}$ is a $(4,1)$-subset in $\{P_1,\dots,P_4\}$, which is a contradiction. Thus, there exists a $(2,4)$-subset $\{\beta_1,\beta_2,\beta_3,\beta_4\}$ in $\{P_1,P_2\}$, i.e., $\{P_1,P_2\}$ has GRP, yielding a contradiction again. 



Next, consider the case (ii). Since $P_1$ and $P_2$ are reduced simultaneously, there exists a $(2,1)$-subset $\{\beta_3\}$ in $\{P_1,P_2\}$ such that $\beta_3\neq \beta_1,\beta_2$. Thus, $\{\beta_1,\beta_2,\beta_3\}$ is a $(2,3)$-subset in $\{P_1,P_2\}$. Note, also, that none of the elements $\beta_1$, $\beta_2$, and $\beta_3$ is a root of $P_3$ or $P_4$. This comes from two facts: (a) if $\beta_3$ is a root of $P_3$ or $P_4$, then reducing $P_1$ and $P_2$ via removing $\beta_3$ must have reduced $P_3$ or $P_4$, which is a contradiction; and (b) if there exists $\beta\in\{\beta_1,\beta_2\}$ such that $\beta$ is a root of $P_3$ or $P_4$, then no other element of $\{\beta_1,\beta_2,\beta_3\}$ belongs, when compared to $\beta$, to more polynomials of degree $4$ in $\{P_1,\dots,P_5\}$ (since $\{\beta\}$ is a $(3,1)$-subset and there exists no $(4,1)$-subset). Thus, $P_1$ and $P_2$ must have been reduced via removing $\beta$, which yields reducing $P_3$ or $P_4$, and hence a contradiction.

Since reducing $P_3$ has reduced $P_1$ (or $P_2$) and reducing $P_4$ has reduced $P_2$ (or $P_1$), then there exist a $(2,1)$-subset $\{\beta_4\}$ in $\{P_1,P_3\}$ and a $(2,1)$-subset $\{\beta_5\}$ in $\{P_2,P_4\}$ such that $\beta_4\neq \beta_5$. Note that $\beta_4$  or $\beta_5$ is not a root of $P_4$ or $P_3$, respectively (otherwise, reducing $P_3$ (or $P_4$) via removing $\beta_4$ (or $\beta_5$) must have reduced $P_4$ (or $P_3$), yielding a contradiction). Thus, $N_1=\{\beta_1,\beta_2,\beta_3,\beta_4\}$ and $N_2=\{\beta_1,\beta_2,\beta_3,\beta_5\}$. Since there is no $(3,3)$-subset in $\{P_3,P_4,P_5\}$, then $P_3$ has a root $\beta_6$ ($\neq \beta_1,\beta_2,\beta_3,\beta_4$) and $P_4$ has a root $\beta_7$ ($\neq \beta_1,\beta_2,\beta_3,\beta_5$) such that neither $\beta_6$ nor $\beta_7$ is a root of $P_5$. (Note that $\beta_6$ and $\beta_7$ may or may not be the same.) 

Let $\hat{P}_i$ be the resulting polynomial from $P_i$ by removing $\beta_3,\beta_6,\beta_7$ from $N_i$, and let $\hat{N}_i\triangleq N_i\setminus \{\beta_3,\beta_6,\beta_7\}$. Let $\hat{d}_i\triangleq \deg(\hat{P}_i)$. Note that $(\hat{d}_1,\dots,\hat{d}_5)=(3,3,3,3,4)$. Since the reduction set $\{\beta_3,\beta_6,\beta_7\}$ is acceptable, $W(\hat{P}_1,\dots,\hat{P}_4)\equiv 0$ (by Lemma~\ref{lem:RW}). Thus, $\{\hat{P}_1,\dots,\hat{P}_4\}$ has GRP (by Theorem~\ref{thm:ThirdSp}). This is a contradiction since there exists no $(4,1)$- or $(3,2)$-subset in $\{\hat{P}_1,\dots,\hat{P}_4\}$ as $|\hat{N}_1\cap\hat{N}_3|=|\hat{N}_2\cap\hat{N}_4|=0$, and there exists no $(2,3)$-subset in $\{\hat{P}_1,\dots,\hat{P}_4\}$ as $|\hat{N}_1\cap\hat{N}_2|=2$, $|\hat{N}_1\cap\hat{N}_3|=|\hat{N}_2\cap\hat{N}_4|=0$, and $|\hat{N}_3\cap\hat{N}_4|\leq 2$.
\end{proof}

\bibliographystyle{IEEEtran}
\bibliography{CDERefs}

\end{document}